\PassOptionsToPackage{table}{xcolor}
\documentclass[sigconf]{acmart}

\setcopyright{none}
\acmConference[WWW '27]{The ACM Web Conference 2027}{May 10--14, 2027}{Dublin, Ireland}
\acmYear{2027}
\acmISBN{}
\acmDOI{}
\usepackage{tabularx}
\usepackage{array}
\usepackage{colortbl}
\usepackage{algorithmic}
\usepackage{algorithm}
\usepackage{booktabs}
\usepackage{multirow}
\usepackage{cleveref}
\usepackage{subcaption}
\usepackage{enumitem}
\usepackage{xspace}
\usepackage{pifont}
\usepackage{etoolbox}
\usepackage[most]{tcolorbox}
\usepackage{makecell}

\newtheorem{invariant}{Invariant}
\newtheorem{assumption}{Assumption}
\crefname{assumption}{Assumption}{Assumptions}
\Crefname{assumption}{Assumption}{Assumptions}
\newtheorem{remark}{Remark}

\definecolor{paperaccent}{RGB}{44,87,118}
\definecolor{paperaccentsoft}{RGB}{232,239,245}
\definecolor{takeawayblue}{RGB}{236,242,247}
\definecolor{takeawayborder}{RGB}{116,143,165}
\definecolor{theorycol}{RGB}{228,235,241}
\definecolor{leadtagblue}{RGB}{225,234,244}
\definecolor{leadtagteal}{RGB}{224,239,232}
\definecolor{leadtaggold}{RGB}{247,235,201}
\definecolor{leadtagrose}{RGB}{244,225,228}
\colorlet{tagsetup}{leadtagblue}
\colorlet{tagsemantic}{leadtaggold}
\colorlet{tagdefense}{leadtagteal}
\colorlet{tagthreat}{leadtagrose}
\definecolor{AstralBlue}{RGB}{232,239,245}
\definecolor{DatasetGray}{RGB}{245,245,245}
\definecolor{MutedText}{gray}{0.45}

\newcommand{\astral}{\textsc{Astral}\xspace}

\newcommand{\evaltablefont}{\small}

\newcommand{\papertablesetup}{%
  \evaltablefont
  \setlength{\tabcolsep}{5pt}%
  \renewcommand{\arraystretch}{1.08}%
}

\newcommand{\astralcell}[1]{\cellcolor{paperaccentsoft}\textbf{#1}}
\newcommand{\astralsoft}[1]{\cellcolor{paperaccentsoft!72}#1}
\newcommand{\commitqc}{\mathsf{CommitQC}}
\newcommand{\semqc}{\mathsf{SemQC}}
\newcommand{\ncc}{\mathsf{NCC}}
\newcommand{\stableqc}{\mathsf{StableQC}}
\newcommand{\counterqc}{\mathsf{CounterQC}}
\newcommand{\crqc}{\mathsf{CRQC}}
\newcommand{\daqc}{\mathsf{DAQC}}
\newcommand{\vstruct}{V_{\mathit{struct}}}
\newcommand{\vinfer}{V_{\mathit{infer}}}

\newcommand{\afs}{\mathsf{AFS}}
\newcommand{\ds}{\mathsf{DS}}
\newcommand{\wid}{\mathsf{wid}}

\newcommand{\eg}{\textit{e.g.}\xspace}

\newtcbox{\leadpill}[1][]{%
  on line,
  boxrule=0.3pt,
  arc=2pt,
  left=3pt,
  right=3pt,
  top=1pt,
  bottom=1pt,
  colback=leadtagblue,
  colframe=paperaccent!18,
  coltext=black,
  boxsep=0pt,
  #1
}
\newcommand{\leadtag}[3]{\par\smallskip\noindent\smash{\leadpill[colback=#2,colframe=#2!55!black]{\textbf{#1}}}\ #3}

\newcommand{\takeawaybox}[1]{%
  \begin{tcolorbox}[
    enhanced,
    colback=takeawayblue, colframe=takeawayborder,
    boxrule=0.6pt, arc=2pt,
    left=4pt, right=4pt, top=3pt, bottom=3pt,
    before skip=4pt, after skip=6pt
  ]
  \textbf{Takeaway.} #1
  \end{tcolorbox}
}
\newcommand{\cmark}{\textcolor{green!60!black}{\ding{51}}}
\newcommand{\xmark}{\textcolor{red}{\ding{55}}}
\newcommand{\pmark}{{\color{orange}$\boldsymbol{\sim}$}}

\newenvironment{packeditemize}{%
  \begin{list}{$\bullet$}{%
    \setlength{\labelwidth}{4pt}%
    \setlength{\itemsep}{0pt}%
    \setlength{\leftmargin}{\labelwidth}%
    \addtolength{\leftmargin}{\labelsep}%
    \setlength{\parindent}{0pt}%
    \setlength{\listparindent}{\parindent}%
    \setlength{\parsep}{0pt}%
    \setlength{\topsep}{1pt}}}{\end{list}}

\setlist[itemize]{leftmargin=*,topsep=2pt,itemsep=1pt}
\AtBeginEnvironment{table}{\papertablesetup}
\AtBeginEnvironment{table*}{\papertablesetup}

\let\origus\_\renewcommand{\_}{\origus\allowbreak}
\begin{document}

\title{Who Verifies the Verifiers? Accountable Semantic Consensus for AI-Mediated Web Information Verification}

\author{Yanna Jiang}
\authornote{Yanna Jiang and Mingyuan Li contributed equally to this work.}
\affiliation{\institution{University of Technology Sydney}\country{Australia}}

\author{Mingyuan Li}
\authornotemark[1]
\affiliation{\institution{ELLIS Institute Finland}\institution{University of Turku}\country{Finland}}

\author{Xu Wang}
\affiliation{\institution{University of Technology Sydney}\country{Australia}}

\author{Guangsheng Yu}
\authornote{Corresponding author.}
\affiliation{\institution{University of Technology Sydney}\country{Australia}}

\begin{abstract}
Large language models (LLMs) increasingly decide whether Web evidence supports online claims, often by voting over several assessors whose errors are presumed independent and benign.
On the open Web, retrieved pages can steer the assessors that read them, and assessors from one model family fail together.
We prove that even for independent assessors whose every output sequence is possible on both valid and invalid claims, any protocol that decides only from their outputs and always reaches an agreed verdict never rejects invalid claims or rejects valid ones with positive probability.

We present \astral, which pins all assessors to one committed snapshot of the Web evidence. With up to $f$ assessors faulty, \astral accepts a verdict only when $2f{+}1$ assessors attest to it above a confidence threshold and no signed challenge stands.
On five public benchmarks with two of seven assessors adversarial, \astral lowers pooled attack success to 21.5--53.1\%, against 54.7--77.8\% for majority voting.
The cost is coverage, as without attack 16.0--61.7\% of claims receive a binary verdict and the rest are disputed for audit.
Our code runs at \url{https://anonymous.4open.science/r/Astral-55F1}.
\end{abstract}

\ccsdesc[500]{Information systems~World Wide Web}
\ccsdesc[300]{Computing methodologies~Natural language processing}

\keywords{Web information integrity, LLM assessors, fact verification, semantic consensus, Byzantine fault tolerance, accountability}

\maketitle

\section{Introduction}\label{sec:introduction}

Large language models (LLMs) mediate much of what Web users read about online claims, and LLM verdicts reach users with the authority of the services that display them.
Search engines answer queries with generated summaries of retrieved pages, and LLMs are proposed and deployed as fact-checking assistants that flag or annotate misleading claims~\cite{factualitychallenges,misinfollm}.
In each pipeline, an LLM decides whether the evidence it retrieved supports a Web claim.
LLMs also judge other systems' outputs, and agreement with human assessors is the main evidence that LLMs are valid judges~\cite{llmjudge}.
Because these AI systems help establish trust on the Web, we ask how to make them trustworthy.
The work fits the Responsible Web track by addressing Web information integrity and accountability for AI systems that assess Web content.

\leadtag{Problem}{tagthreat}{Deployments use several LLM assessors to avoid a single point of failure and aggregate their verdicts by majority vote or confidence-weighted refinement~\cite{multiagentdebate,cpwbft}.}
Aggregation helps only when assessor errors are independent and benign, and the open Web violates both conditions.
Retrieval exposes assessors to content written to mislead them, including indirect prompt injection in Web pages~\cite{promptinjection}.
Assessors from one model family share failure modes and misjudge the same claims.
Some assessors may also be operated by parties with a stake in the verdict.
Label aggregation then turns the errors of a minority into an authoritative Web decision.
\looseness=-1 With two of seven assessors adversarial, confidence-weighted aggregation accepts 20--36\% of refuted synthetic multi-source claims because one assessor claiming high confidence shifts the others to its label (\S\ref{sec:eval:rq2}).

\leadtag{Why stronger voting fails}{tagsemantic}{Stronger voting rules do not remove this weakness.}
Consider any protocol that decides claims only from assessors' outputs and always reaches an agreed verdict.
When independent assessors can produce every output sequence on both valid and invalid claims, such a protocol provably rejects valid claims with positive probability or rejects no invalid claim (Proposition~\ref{prop:impossibility}).
For majority voting, this probability grows with each assessor's error rate.
The proposition also binds the quorum of \astral.
Trust in AI-mediated Web verification therefore requires bounded residual risk and the ability to contest a verdict.
We study this requirement through three research questions.
\begin{itemize}
  \item \textbf{RQ1.} How reliably do aggregation rules decide Web claims when heterogeneous assessors disagree benignly?
  \item \textbf{RQ2.} How fast does verification integrity degrade when a minority of assessors fail systematically or act adversarially?
  \item \textbf{RQ3.} Is semantic consensus a usable integrity primitive for AI-mediated Web verification, and what does it cost?
\end{itemize}
Semantic consensus certifies a verdict only when assessors independently attest it using the same snapshot of evidence and the verdict remains open to challenge.

\leadtag{Approach}{tagdefense}{\looseness=-1 We propose \astral (\textbf{A}ccountable \textbf{S}emantic \textbf{T}riage, \textbf{R}easoning \textbf{A}ttestation \& \textbf{L}iveness), a semantic consensus layer that treats each AI verdict on a Web claim as a protocol object with a lifecycle (\Cref{fig:overview}).}
An ordering layer (L1) fixes the verdict proposal.
Because Web pages change, assessors would see different evidence if they retrieved pages at different times.
An evidence layer (L2) therefore requires every assessor to evaluate the same committed snapshot of retrieved Web evidence.
With up to $f$ assessors Byzantine, a semantic layer (L3) collects $2f{+}1$ attestations that pass a confidence threshold and opens a window in which any assessor can file a signed challenge.
A platform may act on the ordered verdict immediately, and \astral upgrades it to accountable acceptance only after attestation succeeds and no challenge stands.
Contested verdicts are marked as disputed and are left unresolved by vote.

\leadtag{Contributions}{tagsetup}{We make four contributions.}

\begin{packeditemize}
  \item \textit{Assessor failure in Web verification.} We identify failures by assessors as a weakness of AI-mediated Web verification and prove that when the output sequences of independent assessors never separate valid from invalid claims, a protocol that rejects invalid claims also rejects valid ones with positive probability (\S\ref{sec:threat}).
  \item \looseness=-1 \textit{Accountable semantic consensus.} \astral commits a snapshot of Web evidence, counts attestations that pass a confidence threshold, and keeps verdicts challengeable. We bound residual false acceptance by assessor error, error correlation, and challenge success (\S\ref{sec:design}).
  \item \textit{Robustness on Web claims.} We evaluate seven open-weight assessors on five public benchmarks. With two adversarial assessors, \astral lowers pooled attack success over four attacks to 21.5--53.1\%, against 54.7--77.8\% for majority voting and 70.8--85.2\% for confidence-weighted aggregation. Without attack it returns binary verdicts on 16.0--61.7\% of claims, certifies 6.8--30.3\% with a StableQC, and sends the rest to \texttt{DISPUTED} for audit (\S\ref{sec:evaluation}).
  \item \textit{Assessor diversity.} We observed joint false acceptances in a committee from one model family but none in committees spanning three or more families. On the public benchmarks, single-model committees range from 1.4--8.0\% to 57.5--90.1\% pooled attack success depending on the backend. We therefore identify diversity as a hedge against a weak backend, which does not guarantee lower risk than the strongest single model.
\end{packeditemize}

\section{AI-Mediated Web Verification}\label{sec:threat}

\begin{figure*}[t]
  \centering
  \includegraphics[width=0.86\linewidth]{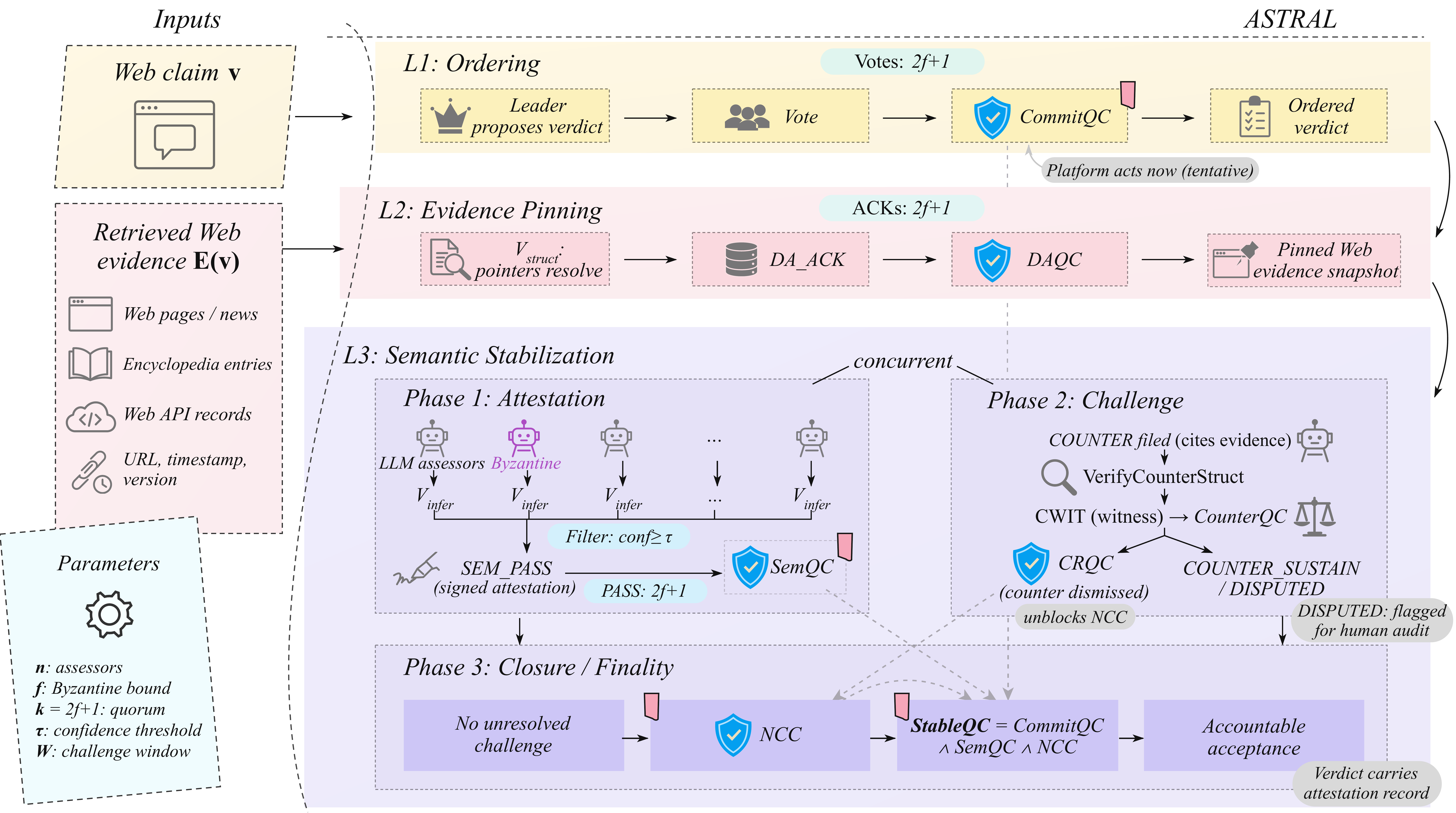}
  \caption{%
    \astral as a verification layer for Web claims.
    \textbf{L1} orders the verdict proposal and yields $\commitqc$, on which a platform can act immediately.
    \textbf{L2} pins the retrieved Web evidence, because assessors sign \texttt{DA\_ACK} only after the structural check $\vstruct$ passes and $\daqc$ certifies one committed evidence snapshot.
    \textbf{L3} runs attestation concurrently with a challenge window. Attestation aggregates \texttt{SEM\_PASS} messages into $\semqc$, and during the window assessors file and triage \texttt{COUNTER}s.
    With no unresolved challenge, $\ncc$ forms and $\stableqc = \commitqc \wedge \semqc \wedge \ncc$ certifies accountable acceptance.
  }
  \label{fig:overview}
\end{figure*}

\subsection{Setting}\label{sec:threat:system}

A \emph{Web verification task} $D$ pairs a Web claim with pointers to evidence retrieved for it, such as news articles, encyclopedia entries, or vulnerability records.
A committee of $n$ LLM assessors $p_1, \ldots, p_n$, each identified by a public key, decides $D$.
Each $p_i$ queries its LLM backend, stochastic oracle $O_i$, to run $\vinfer(D)$, which stochastically estimates the inference policy.
The check returns a label in $\{\mathsf{SUPPORT}, \mathsf{REJECT}\}$, a confidence in $[0,1]$, and a pointer to the evidence used.
Retrieved evidence is written to an append-only, content-addressed store $\afs$, whose Merkle root is committed with the proposal for $D$.
Every assessor therefore evaluates the same snapshot even when live pages change or disappear after retrieval.
A claim is semantically valid when the committed policy assigns the claim confidence at or above the acceptance threshold using this snapshot.
Membership is static, and up to $f$ assessors may be Byzantine (arbitrarily faulty), with $n \geq 3f+1$.
The network is partially synchronous. After an unknown global stabilization time (GST), messages arrive within the known bound $\Delta$ (\Cref{asm:dls}).
An ordering protocol can provide the ordering layer if it exposes an external-validity hook~\cite{pbft,hotstuff,tendermint} and meets the assumptions of \textbf{Appendix~\ref{app:formalization}}.
\looseness=-1 $\commitqc$ lets the platform act on an ordered verdict immediately, and $\stableqc$ later certifies accountable acceptance (\S\ref{sec:design}).

\subsection{How Assessors Fail on the Web}\label{sec:threat:faults}

Each of four failure sources in Web verification maps to a quantity controlled by the protocol or its analysis.
An honest assessor makes a \emph{model error} by hallucinating or misreading.
Model error contributes to each assessor's false-accept rate $\epsilon$, which we measure as $\hat{\epsilon}(\tau)$ at confidence threshold $\tau$ (\S\ref{sec:design:probability}).
\emph{Content-induced error} arises when a retrieved page is written to mislead, for instance through an injected instruction or fabricated corroboration~\cite{promptinjection}.
Such a page raises the false-accept rate of every assessor that reads it to some $\epsilon' \geq \epsilon$.
\emph{Correlated error} arises when assessors share a model family or read the same misleading page and therefore misjudge the same claims.
We account for correlated error with the correlation factor $\rho_{k_h}$ (\Cref{eq:rho}), which is at most 1 when assessor errors are independent and grows as assessors share failures.
\emph{Adversarial assessors} are operated by a party with a stake in the verdict or steered into arbitrary behavior by content.
Up to $f$ such assessors may collude, suppress evidence, equivocate selectively, or otherwise behave arbitrarily.
We design the quorum, certificate, and challenge rules of \astral for this fault model.

Content that steers more than $f$ assessors to the same wrong label lies outside any quorum guarantee.
In the analysis this case appears as $\epsilon'$ close to 1 and $\rho_{k_h} \gg 1$.
The residual-risk estimate then degrades without overstating protection.
We confine all four failure sources to LLM outputs through the following rule.

\begin{invariant}[Deterministic Wrapper]\label{inv:wrapper}
Protocol state transitions never invoke an LLM\@.
All consensus messages (\texttt{VOTE}, \texttt{NO\_CHALLENGE}, \texttt{CWIT}, and the like) are generated by deterministic code from verified local state.
LLM inference is invoked only to generate proposal content, to produce \texttt{SEM\_PASS} attestations via $\vinfer$, and to produce \texttt{COUNTER\_SUSTAIN} or semantic \texttt{COUNTER\_REJECT} votes during counter triage.
\end{invariant}

\noindent An honest assessor misled by content can therefore produce wrong labels only in these three message types, and it cannot cause equivocation, invalid votes, or inconsistent protocol transitions.
A Byzantine assessor need not run the wrapper.

\subsection{Agreement Is Not Validity}\label{sec:threat:guarantees}

\looseness=-1 Adding assessors cannot by itself make a Web verdict valid.

\begin{proposition}[False Rejection under Shared Support]\label{prop:impossibility}
Let a protocol $\Pi$ read the input $D$ only through outputs of independent stochastic oracles $O_i$ in a common finite set, with queries, stopping, and verdict that depend only on earlier outputs and on coins independent of $D$ and of the oracles.
Assume \emph{shared support}, meaning that for every coin value and after every reachable history each next output has positive probability on every semantically non-trivial input.
Let $\Pi$ achieve agreement, terminate with probability 1, and reject with positive probability some input in its \emph{rejection coverage} $F_{\mathit{inv}}$, the set of non-trivial invalid inputs it can reject.
Then $\Pi$ falsely rejects every non-trivial valid input $D'$ with probability
\begin{equation}\label{eq:impossibility}
P_{\mathit{err}}(\Pi, D') > 0.
\end{equation}
\emergencystretch=1em If each of $n$ assessors is wrong on $D'$ independently with probability $e$, majority reconciliation falsely rejects $D'$ with probability $\sum_{i > n/2} \binom{n}{i} e^i (1 - e)^{n-i}$, which is $2.7 \times 10^{-3}$ for $n = 7$ and $e = 0.1$.
\end{proposition}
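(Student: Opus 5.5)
The plan is a path-transfer argument. Because $\Pi$ sees $D$ only through oracle outputs, any execution that leads to rejection on an invalid input can be replayed on a valid input with positive probability.

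\textbf{Step 1 (a rejecting run on an invalid input).} Fix $D \in F_{\mathit{inv}}$ on which $\Pi$ rejects with positive probability. Since $\Pi$ terminates with probability 1, the rejecting runs on $D$ are almost surely finite. Each finite run is determined by a coin value $c$ and a finite sequence of oracle outputs $h = (o_1, \ldots, o_T)$, where each query target, the stopping time and the verdict are functions of $(c, o_1, \ldots, o_{t-1})$ only. The output set is finite and runs are finite, so there are countably many pairs $(c, h)$ when the coins are discrete. Countable additivity then gives one pair $(c^*, h^*)$ with positive probability that ends in $\mathsf{REJECT}$. If the coins are continuous, I would instead fix a history $h^*$ whose set of rejecting coin values has positive measure, and run the rest of the argument coin-by-coin.

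\textbf{Step 2 (transfer to a valid input).} Take any non-trivial valid $D'$ and the same coin value $c^*$. The queries issued along $h^*$ depend only on $c^*$ and earlier outputs, so they are the same on $D'$. The probability of observing $h^*$ on $D'$ is therefore the product over $t$ of the probability that the queried oracle returns $o_t$ given the prefix. Shared support makes every factor positive, and the product is finite because $T$ is finite. The coins are independent of $D$, so the event $\{c = c^*\}$, or the positive-measure coin set, has the same probability under $D'$. Hence $\Pi$ on $D'$ follows history $h^*$ and outputs $\mathsf{REJECT}$ with positive probability.

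\textbf{Step 3 (role of agreement).} Agreement means all honest assessors output the same verdict, so the $\mathsf{REJECT}$ on $D'$ is the protocol's verdict and not one dissenting output. This gives $P_{\mathit{err}}(\Pi, D') > 0$.

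\textbf{Main obstacle.} The delicate step is the measure-theoretic transfer in Step 2 when coins are continuous or runs are unbounded. The rejection event on $D'$ must be measurable, and its probability must be at least an integral of the positive path probability against the coin measure. I would handle this with Fubini, conditioning on the coins first. Termination matters here: it guarantees that the rejecting mass on $D$ sits on finite histories, so no infinite product of probabilities can vanish.

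\textbf{Majority-voting formula.} Majority reconciliation rejects the valid input $D'$ exactly when more than $n/2$ of the $n$ independent assessors err. With error probability $e$ per assessor, this is a binomial tail:
\[
\sum_{i>n/2}\binom{n}{i}e^i(1-e)^{n-i}.
\]
For $n = 7$ and $e = 0.1$ the terms with $i = 4, \ldots, 7$ sum to about $2.57\times10^{-3} + 1.3\times10^{-4} + \ldots \approx 2.7 \times 10^{-3}$.
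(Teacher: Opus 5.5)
Your proposal is correct and follows essentially the same path-transfer argument as the paper: given the coins, the set of rejecting finite histories does not depend on the input, and termination lets rejection be written as $\mathbb{E}_r[\sum_{h \in H_{\mathit{rej}}(r)} \Pr_D[h \mid r]]$. Shared support then makes the inner sum positive on $D'$ for the same positive-probability coins, and the paper's single expectation over coins covers your discrete and continuous cases together. One small arithmetic slip: the $i=4$ and $i=5$ terms are $2.55\times10^{-3}$ and $1.70\times10^{-4}$, although your total of about $2.7\times10^{-3}$ is right.
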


\noindent\emph{Proof sketch.}
With positive probability over the coins, some finite output history leads $\Pi$ to reject an input in $F_{\mathit{inv}}$.
Because $\Pi$ sees $D$ only through outputs, the same history leads to $\mathsf{INVALID}$ on a valid $D'$, where shared support gives it positive probability.
\textbf{Appendix~\ref{app:proof:tradeoff}} gives the full proof.

The proposition fixes what a Web verification layer can certify.
\astral certifies \emph{accountable acceptance}, meaning a verdict that was independently attested on a pinned evidence snapshot, survived a challenge window, and carries a signed certificate chain recording who attested and who challenged.
Its guarantees hold at three levels.
Ordering agreement is inherited unconditionally from L1.
Evidence consistency is deterministic, since $\daqc$ binds every attestation to one committed snapshot that at least $f{+}1$ honest assessors hold.
\looseness=-1 Semantic acceptance is distributional, with a residual false-acceptance rate bounded by $\delta$ over the task distribution (\Cref{eq:delta}).

\section{The \astral Protocol}\label{sec:design}

\astral handles a Web verdict in three layers with different guarantees (Fig.~\ref{fig:overview}).
A platform acts at $\commitqc$, while \astral runs L3 entirely after commitment to certify accountable acceptance ($\stableqc$) or flag the verdict as disputed.

\subsection{Ordering and Evidence Pinning}\label{sec:design:l1}\label{sec:design:daqc}

\looseness=-1 L1 provides $\mathsf{Commit}(h) = \mathit{val}_h$, a totally ordered sequence of verdict proposals with agreement (unconditional for deterministic BFT) and post-GST liveness.
An ordering protocol that exposes an external-validity predicate and meets the assumptions of \textbf{Appendix~\ref{app:formalization}} suffices, including deterministic BFT~\cite{hotstuff,pbft,aptosdiembft} and probabilistic chains with $k$-deep confirmation (\textbf{Appendix~\ref{app:hotstuff}}).
The fault threshold $n \geq 3f+1$ is inherited from L1.
Ordering treats the proposal as opaque bytes and says nothing about whether the claim is supported.

Each proposal carries a \emph{decision object}
\begin{align}\label{eq:decision}
D = (&\mathtt{claim},\, \mathtt{evid\_pointers},\, \mathtt{lineage\_pointers},\, \notag\\
     &\mathtt{policy\_hash},\, \mathtt{afs\_root},\, H_{\mathit{ctx}}),
\end{align}
which binds a Web claim to pointers into the committed evidence store, the hash of an inference policy $\pi$, and a context hash $H_{\mathit{ctx}}$.
The policy assigns confidence to a claim given evidence.
A claim is semantically valid under $\pi$ when $\pi(E, \mathtt{claim}) \geq \tau$ for the committed evidence $E$.
Each assessor's $\vinfer$ estimates $\pi$ stochastically (\textbf{Appendix~\ref{app:form:decision}}).
Honest assessors can therefore disagree on claims whose confidence lies near $\tau$.

\smallskip
\noindent\textbf{Evidence pinning (L2).}
Web pages change, so retrieving at different times would give assessors different evidence and make their attestations incomparable.
After $\commitqc$, the leader broadcasts the pre-block $P$ and referenced artifacts.
$P$ carries the committed value with $\mathsf{afs\_root}$, the evidence pointers, and $H_{\mathit{ctx}}$.
Each assessor runs the deterministic admission check $\vstruct$ to verify that every evidence pointer resolves against $\mathsf{afs\_root}$ and that the policy hash is valid.
The check verifies that $H_{\mathit{ctx}}$ is consistent with the pointers and that recorded tool outputs match their hashes.
An assessor whose check passes signs $\texttt{DA\_ACK}(\ds, H(P), \mathsf{afs\_root})$, and $2f{+}1$ signatures form $\daqc$.
As with the certificates of availability in Narwhal~\cite{narwhal}, $2f{+}1$ signatures guarantee that at least $f{+}1$ honest assessors hold the evidence.
L3 begins only after $\daqc$, so every honest assessor evaluates the snapshot fixed by the committed Merkle root.
All honest assessors agree on the outcome of $\vstruct$ because it is deterministic.
The check is purely restrictive, so it cannot introduce an L1 safety violation.
A failed check makes the committed block a no-op for L2 and L3.
Unavailable evidence stalls $\daqc$, which costs liveness but never safety.

\subsection{Semantic Stabilization}\label{sec:design:l3}

\emergencystretch=3em L3 moves a committed verdict from tentative agreement ($\commitqc$) to accountable acceptance ($\stableqc$).
Attestation (Phase~\textcircled{1}) and the challenge window (Phase~\textcircled{2}) run concurrently after $\daqc$, and closure (Phase~\textcircled{3}) follows the window.

\smallskip
\noindent\textbf{Phase~\textcircled{1}: attestation.}\label{sec:design:attestation}
\looseness=-1 Each assessor runs $\vinfer(D)$ on the pinned snapshot.
An assessor that obtains $\mathsf{SUPPORT}$ with confidence at least $\tau$ signs a \texttt{SEM\_PASS} attestation, and $k = 2f{+}1$ attestations from distinct assessors form $\semqc$.
We count attestations whose confidence meets $\tau$ equally because weighting by confidence can let a single Byzantine assessor reporting high confidence set the outcome (\S\ref{sec:eval:rq2}).
The $2f{+}1$ distinct attesters include at least $f{+}1$ honest assessors (Lemma~\ref{lem:honestsubset}). Thus the $f$ Byzantine assessors cannot form $\semqc$ alone, whatever confidence they report.
Setting $k = 2f{+}1$ maximizes how many honest assessors attest to each accepted verdict.
Reducing $k$, down to $f{+}1$, tolerates more false rejections by honest assessors at the cost of a weaker residual-risk bound (\textbf{Appendix~\ref{app:protocol:threshold}}).

\smallskip
\noindent\textbf{Phase~\textcircled{2}: challenge window.}\label{sec:design:challenge}
\looseness=-1 After $\commitqc$, a challenge window of duration $W(v) \geq 2\Delta$ opens, during which any assessor can file a signed \texttt{COUNTER} that references evidence in $\afs$ contradicting the claim.
Because assessors admit only counters filed before $W(v) - \Delta$, every admitted counter reaches all honest assessors before the window closes post-GST.
The window adapts the fraud proofs of optimistic rollups~\cite{rollupsurvey,arbitrum} to non-deterministic verification.
A \emph{deterministic fraud proof} re-executes a recorded tool call and can be checked by every assessor, whereas a \emph{semantic counter} argues from evidence and must be triaged with $\vinfer$.
Accountability is cryptographic, and no staking is involved.
The window grows linearly in the view number as $W_0 \cdot v$, where $W_0$ is added at each view (\Cref{alg:consensus}).
This linear growth guarantees coverage while the network remains unstable before GST.
A lower cap on $W(v)$ gives faster acceptance but leaves less time for honest challenges (\textbf{Appendix~\ref{app:protocol}}).

A counter first passes \texttt{VerifyCounterStruct}, which deterministically rejects malformed and duplicate counters.
Then $2f{+}1$ witness signatures (\texttt{CWIT}) form a $\counterqc$.
A valid fraud proof marks the claim \textsc{refuted} and blocks $\stableqc$ permanently.
Each assessor triages a semantic counter with $\vinfer$ and votes \texttt{COUNTER\_SUSTAIN} or \texttt{COUNTER\_REJECT}.
Then $2f{+}1$ \texttt{COUNTER\_REJECT} votes form $\crqc$, which dismisses the counter.
If the counter is not dismissed by the triage timeout $T_{\mathit{arb}}$, the decision becomes \texttt{DISPUTED}.

\begin{definition}[Counter Lifecycle]\label{def:counterlifecycle}
A counter for $(h, H(b), \wid)$ is \emph{pending} once it is structurally valid and added to \textsf{unresolved\-Counters}.
It becomes \emph{resolved} when it reaches a terminal state ($\crqc$ rejection, \textsc{sustained}, or \textsc{disputed} timeout), and it is then \emph{pruned} from the Mandatory Inclusion enforcement set.
Only a $\crqc$ rejection removes it from $\mathsf{unresolvedCounters}$, the authoritative blocking criterion for $\ncc$ signing.
A \textsc{sustained} fraud proof is final for height $h$. A \textsc{disputed} flag is provisional, and a $\stableqc$ formed later replaces it.
\end{definition}

\emergencystretch=1em A structurally valid counter enters $\mathsf{unresolvedCounters}$ and leaves it only at a terminal state (Definition~\ref{def:counterlifecycle}), so its filer cannot withdraw it.
Mandatory Inclusion prevents leaders from dropping certified counters because assessors deny \texttt{DA\_ACK} to proposals that omit a known $\counterqc$ for an unresolved counter.
Unblocking $\ncc$ requires $2f{+}1$ \texttt{COUNTER\_REJECT} votes so that the $f$ Byzantine assessors alone can never dismiss a sound counter.
Each Byzantine assessor can file at most one structurally valid counter per window, and triaging each counter adds at most $T_{\mathit{arb}}$.
Hence, including data availability and inference time, stabilization after $\commitqc$ satisfies
\begin{equation}\label{eq:delay-bound}
T_{\text{stable}} \leq T_{\mathit{DA}} + T_{\mathit{infer}} + W(v) + f \cdot T_{\mathit{arb}}.
\end{equation}
\looseness=-1 Beyond this bound, unresolved counters trigger \texttt{DISPUTED}, and the platform suspends its $\commitqc$ action pending human audit.

\smallskip
\noindent\textbf{Phase~\textcircled{3}: closure.}\label{sec:design:closure}
When the window closes with no unresolved counter, assessors sign \texttt{NO\_CHALLENGE}.
Then $2f{+}1$ signatures form $\ncc$, and the verdict reaches semantic finality with $\stableqc = \commitqc \wedge \semqc \wedge \ncc$.
Without $\semqc$ at closure and with no counter pending or final, the verdict is rejected provisionally until a forwarded $\semqc$ arrives.
\looseness=-1 \Cref{tab:dualfinality} lists the four outcomes a platform sees.

\begin{table}[t]
\centering
\caption{Dual-finality outcomes of a Web verdict.}\label{tab:dualfinality}
\resizebox{\columnwidth}{!}{
\begin{tabular}{@{}llll@{}}
\toprule
\textbf{Level} & \textbf{Certificate} & \textbf{Guarantee} & \textbf{Platform action} \\
\midrule
Tentative & $\commitqc$ & L1 agreement & Act immediately \\
Stable & \astralcell{$\stableqc$} & \astralcell{Accountable acceptance} & \astralcell{Upgrade confidence} \\
Rejected & none & No quorum yet (provisional) & Withdraw action \\
Disputed & \astralsoft{none} & \astralsoft{Liveness preserved} & \astralsoft{Suspend, human audit} \\
\bottomrule
\end{tabular}
}
\end{table}

\subsection{Residual-Risk Bound}\label{sec:design:probability}

On invalid claims from a characterized distribution $\mathcal{D}$ of Web claims, $\epsilon$ bounds each honest assessor's false-accept rate, and we estimate it by $\hat{\epsilon}(\tau)$.
With $A_i$ the event that honest assessor $i$ falsely accepts and $H$ a fixed set of $n-f$ honest assessors, we define the correlation factor
\begin{equation}\label{eq:rho}
\rho_{k_h} := \max_{S \subseteq H,\, |S| = k_h} \Pr\!\left[\textstyle\bigcap_{i \in S} A_i\right] / \epsilon^{k_h},
\end{equation}
where $k_h := f+1$ is the number of attesters from $H$ that every $\semqc$ contains (Lemma~\ref{lem:honestsubset}).

\begin{lemma}[Honest Subset in $\semqc$]\label{lem:honestsubset}
$\semqc$ requires $k = 2f+1$ attestations from distinct assessors.
Only $f$ assessors lie outside $H$, so at least $k - f = k_h$ attestations come from assessors in $H$.
\end{lemma}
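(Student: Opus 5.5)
The argument is a pigeonhole count on the attester set of a $\semqc$. I would fix an arbitrary $\semqc$ and let $S$ be the set of assessors whose \texttt{SEM\_PASS} attestations it aggregates. The first step is to justify $|S| = k = 2f{+}1$. This comes from the formation rule in Phase~\textcircled{1}, which counts attestations from \emph{distinct} assessors. Because each assessor is identified by a public key, signature verification under those keys means no assessor's attestation can be counted twice. This holds even for a Byzantine assessor, which cannot forge another assessor's signature, so its signatures appear under at most its own key.

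Next I would fix the honest set $H$ with $|H| = n - f$, as in the definition of $\rho_{k_h}$ (\Cref{eq:rho}). I would take $H$ to be any $n-f$ assessors that contain every actually honest one. The complement $\overline{H}$ then has exactly $f$ members, and because at most $f$ assessors are Byzantine, every Byzantine assessor can be placed in $\overline{H}$. With that choice,
\[
|S \cap H| = |S| - |S \cap \overline{H}| \geq (2f+1) - f = f+1 = k_h .
\]
So at least $k_h$ of the attestations in the $\semqc$ come from assessors in $H$. This is the statement of the lemma, and the count does not depend on the content or confidence of any attestation.

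The only real subtlety is the dependence on $H$. The count works for every $(n-f)$-subset that contains all honest assessors, and such a subset exists precisely because at most $f$ assessors are Byzantine. I would also note that the constraint $n \geq 3f+1$ is not needed for this inequality. It is needed only for quorum intersection in L1, and also, for liveness, to ensure that $2f{+}1$ attestations are attainable at all.
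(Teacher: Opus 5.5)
Your pigeonhole count is the paper's argument: a $\semqc$ carries $2f{+}1$ distinct attesters, and at most $|\overline{H}| = f$ of them lie outside $H$. So at least $f{+}1 = k_h$ of them lie in $H$. You also appeal to unforgeable signatures, so that an attestation under an honest key really came from that honest assessor, which is a sensible detail the paper leaves implicit.

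One thing to fix is your description of $H$, where the inclusion is reversed. An $(n{-}f)$-set that \emph{contains every honest assessor} exists only when at least $f$ assessors are Byzantine. With fewer than $f$ faults your $H$ does not exist. Your closing claim that such a set exists ``precisely because at most $f$ assessors are Byzantine'' is therefore backwards.

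The paper instead uses an $H$ \emph{consisting of} $n{-}f$ honest assessors, which is also what your sentence about placing every Byzantine assessor in $\overline{H}$ actually describes. That set exists because at most $f$ assessors are Byzantine.

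The counting step itself needs no honesty condition, since $|S \cap H| \geq f{+}1$ holds for every $(n{-}f)$-subset $H$. Honesty of $H$ matters only downstream. There, each of the $k_h$ attesters in $H$ is charged a false-accept rate of at most $\epsilon$ in the union bound for $\Pr[Q]$.
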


An invalid claim obtains $\semqc$ only in the event $Q$ that at least $k_h$ assessors in $H$ falsely accept.
Because $Q$ is the union of the joint false-acceptance events of the $\binom{n-f}{k_h}$ subsets of $H$ of size $k_h$, the union bound gives $\Pr[Q] \leq \binom{n-f}{k_h}\epsilon^{k_h}\rho_{k_h}$.
A counter that is filed in time, delivered, and upheld in triage blocks accountable acceptance. With $p_{\mathit{ch}}$ the probability of such a counter conditioned on $Q$,
\begin{equation}\label{eq:delta}
\delta(\epsilon, k_h) \leq \tbinom{n-f}{k_h} \cdot \epsilon^{k_h} \cdot \rho_{k_h} \cdot (1 - p_{\mathit{ch}}).
\end{equation}
An unconditional estimate of $p_{\mathit{ch}}$ is valid only when $Q$ and challenge failure are close to independent, and subtle invalidity can violate that independence (\textbf{Appendix~\ref{app:formalization}}).
Post-GST, $\counterqc$ and Mandatory Inclusion deliver every filed counter. If the $m$ honest checkers detect independently given $Q$ with probability $p_{\mathit{det}}$ each, at least one of them detects with probability $1 - (1 - p_{\mathit{det}})^m$, and triage must still uphold the counter.

\smallskip
\noindent\textbf{Diversity.}
Operators should configure a diversity predicate that requires attestations from $\geq d$ distinct oracle implementations to reduce $\rho_{k_h}$.
This predicate is an engineering recommendation (\Cref{app:protocol:diversity}).
Because $\rho_{k_h}$ multiplies the bound in \Cref{eq:delta}, correlated errors in a less diverse committee loosen $\delta$.
On the multi-source claims of \S\ref{sec:eval:setup}, a fixed group drawn from a single model family has an observed ratio $\hat{\rho}_s = 50$ (\Cref{tab:rho}).

\subsection{Security Properties}\label{sec:design:security}

\begin{theorem}[Agreement Safety]\label{thm:agreement}
No two honest assessors commit conflicting values at the same height.
\end{theorem}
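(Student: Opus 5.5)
The plan is a reduction to the safety of the ordering layer L1. The key observation is that \astral never changes which value is committed at a height: it only adds certificates on top of a value L1 has already committed. The argument has three steps.

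\textbf{Step 1: commitment is defined by L1 alone.} An honest assessor commits $\mathit{val}_h$ at height $h$ exactly when it holds a $\commitqc$ for $h$, i.e., when L1 outputs $\mathsf{Commit}(h)=\mathit{val}_h$. Everything after $\commitqc$ only gates whether that value is upgraded: $\daqc$, $\semqc$, $\counterqc$, $\crqc$, $\ncc$, and the \texttt{DISPUTED} and \textsc{refuted} flags. None of these can produce a second value at height $h$.

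\textbf{Step 2: L2 and L3 are purely restrictive.} I would show that L2 and L3 cannot cause an honest assessor to commit something L1 did not.
\begin{itemize}
\item By the Deterministic Wrapper (Invariant~\ref{inv:wrapper}), every L1 message from an honest assessor (\texttt{VOTE} and the like) is produced by deterministic code from verified local state. No LLM output enters a protocol state transition.
\item $\vstruct$ is deterministic and only restrictive. A failed check turns the committed block into a no-op for L2 and L3. It does not trigger a recommit or a substitute value.
\item The remaining state changes (counter lifecycle, Mandatory Inclusion, denying \texttt{DA\_ACK}) only withhold signatures or mark outcomes. A no-op, a rejection, or a dispute at height $h$ is not a conflicting commit.
\end{itemize}
Hence the set of committed values at $h$ in the full protocol is a subset of those committed by L1 running with the same external-validity predicate.

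\textbf{Step 3: invoke L1 agreement.} By the assumptions of Appendix~\ref{app:formalization}, L1 guarantees agreement with $n\ge 3f+1$. The underlying reason is quorum intersection: two $\commitqc$s of size $2f{+}1$ for conflicting values would intersect in at least $f{+}1$ assessors. At least one of them is honest, and an honest assessor never votes for two values at one height. By Invariant~\ref{inv:wrapper}, this holds even for an honest assessor misled by content, since such an assessor cannot equivocate. For probabilistic chains with $k$-deep confirmation (Appendix~\ref{app:hotstuff}), the statement is inherited in the corresponding sense, so I would state the reduction generically.

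\textbf{Main obstacle.} The hardest step is making rigorous that the external-validity hook, together with L2 and L3 gating, cannot break L1's safety. The risk is that honest assessors evaluate the validity predicate differently, which could stall liveness, or that LLM nondeterminism enters the predicate. I would handle this by requiring the hook to call only deterministic checks ($\vstruct$-style structural validity), never $\vinfer$. Then all honest assessors evaluate the predicate identically, and safety of the wrapped protocol follows directly from safety of the unwrapped one.
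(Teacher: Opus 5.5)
Your proposal is correct and follows the paper's route: agreement safety is inherited from L1 because $\vstruct$, $\daqc$, Mandatory Inclusion, and all of L3 run only after $\commitqc$ and can only withhold signatures, while honest L1 votes consult only $P_{\mathit{ext}}$, $\mathsf{voted}[h,v]$, and $\mathsf{lockedQC}$ (Invariant~\ref{inv:isolation}), so invoking L1 agreement as a black box is the right move (your quorum-intersection gloss is only illustrative, since cross-view safety rests on L1's locking rather than on one vote per height). The step you call the main obstacle is a liveness concern, not a safety one: by Remark~\ref{rem:extvalidity:app}, any extra condition that only restricts honest votes preserves L1 safety whether or not it is deterministic, and the paper in fact keeps every evidence-dependent check out of $P_{\mathit{ext}}$ rather than moving $\vstruct$ into it.
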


\begin{theorem}[Stable Agreement]\label{thm:stableagreement}
\looseness=-1 If two honest assessors independently output $\mathsf{Certify}(h)$ for height $h$, they output the same value.
\end{theorem}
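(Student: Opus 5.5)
We reduce Stable Agreement to Agreement Safety (Theorem~\ref{thm:agreement}) together with the structure of $\stableqc$. We assume an honest assessor outputs $\mathsf{Certify}(h)$ only when it holds a $\stableqc = \commitqc \wedge \semqc \wedge \ncc$ for height $h$. It binds that output to the value $\mathit{val}_h$ named in the $\commitqc$ component, and the $\semqc$ and $\ncc$ components must reference the same tuple $(h, H(b), \wid)$. The first step is to make this binding explicit from the deterministic wrapper (Invariant~\ref{inv:wrapper}). An honest assessor checks that every certificate in the conjunction carries the same height, the same block hash $H(b)$, and the same $\mathsf{afs\_root}$ fixed by $\daqc$. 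It refuses to assemble a $\stableqc$ otherwise. Hence two certify outputs at height $h$ name values $v_1, v_2$ each backed by a valid $\commitqc$ at $h$.

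The second step shows $v_1 = v_2$ from the $\commitqc$ components alone. Take two honest assessors that certify independently, so neither needs to have itself committed. We argue by quorum intersection rather than by appealing to local commit state. Each $\commitqc$ carries $2f{+}1$ signatures from distinct assessors. With $n \ge 3f+1$, two such quorums intersect in at least $f{+}1$ assessors, and hence in at least one honest assessor. That honest assessor signed commit votes for both values at height $h$. This contradicts the L1 voting rule inherited from the ordering layer, under which an honest assessor never votes for two conflicting values at one height. Equivalently, it contradicts Theorem~\ref{thm:agreement}, which the paper derives from exactly this property. For probabilistic L1 instantiations, we would instead invoke the $k$-deep confirmation assumption of Appendix~\ref{app:formalization} to obtain the same conclusion.

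The third step, and the main obstacle, is to rule out a mismatch between components. The risk is a $\semqc$ or $\ncc$ formed for one value and combined with a $\commitqc$ for another, for instance across view changes where $W(v)$ and the window identifier $\wid$ differ. A related risk is a provisional \texttt{DISPUTED} outcome that is later replaced by a $\stableqc$. We would handle this by noting that \texttt{SEM\_PASS} and \texttt{NO\_CHALLENGE} messages sign $H(b)$, so each $2f{+}1$ quorum is bound to a single block. By the second step, any block with a valid $\commitqc$ at height $h$ is unique. Across different $\wid$ values at the same height, any $\stableqc$ therefore still names that same unique block. Definition~\ref{def:counterlifecycle} only governs whether a $\stableqc$ forms, not which value it carries, so replacing a disputed flag by a later $\stableqc$ cannot change the certified value. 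Combining the three steps shows that both outputs equal the unique committed $\mathit{val}_h$.
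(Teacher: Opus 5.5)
Your first and third steps are the paper's argument. $\mathsf{Certify}(h)$ needs a $\commitqc$, a $\semqc$ and an $\ncc$ that all reference the same committed block hash $H(b)$. The certified value is therefore the block committed at $h$, and the paper takes the uniqueness of that block from L1 via Theorem~\ref{thm:agreement}.

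The flaw is in how your second step tries to prove that uniqueness itself. The premise that ``an honest assessor never votes for two conflicting values at one height'' is not an L1 rule. In Algorithm~\ref{alg:consensus} an honest replica votes at most once per pair $(h,v)$ (the flag $\mathsf{voted}[h,v]$). It may legitimately vote for different blocks at the same height in different views. What stops conflicting commits across views is quorum locking through $\mathsf{lockedQC}$ (Invariant~\ref{inv:isolation}). Also, a $\commitqc$ comes from a 3-chain of QCs, not from a single quorum of votes on the value. Intersecting two $2f{+}1$ quorums therefore only rules out conflicting $\commitqc$s formed in the same view. It leaves open the cross-view case, which is exactly what view-based BFT safety has to handle. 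Your fallback (``equivalently, Theorem~\ref{thm:agreement}'') does not apply as written either. You explicitly allowed that neither certifier has itself committed, while Theorem~\ref{thm:agreement} is a statement about honest commits.

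The repair is short, and you have two options:
\begin{itemize}
\item Invoke L1 safety at the certificate level, which is what Assumption~\ref{asm:a1} supplies: standard views and quorum locking prevent conflicting commit quorums for the same slot.
\item Note that in Algorithm~\ref{alg:consensus} an honest replica holding a valid $\commitqc$ outputs $\mathsf{Commit}(h)$ before it runs L2 and L3. Both certifiers have then committed, and Theorem~\ref{thm:agreement} gives the same value directly.
\end{itemize}
With either fix, your third step finishes the proof as in the paper.

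Separately, your remark on probabilistic L1 instantiations overclaims. There, $k$-deep confirmation (Appendix~\ref{app:hotstuff}, not Appendix~\ref{app:formalization}) gives the conclusion only except with reversion probability $p_{\mathit{rev}}(k)$, so you do not recover ``the same conclusion.'' The theorem holds unconditionally only for deterministic BFT.
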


\begin{theorem}[\astral Liveness]\label{thm:liveness}
Assume \Cref{asm:a1,asm:a2,asm:a3,asm:a4,asm:a5,asm:a6}, post-GST operation, and $2f{+}1$ \texttt{SEM\_PASS} attestations that arrive in time.
Honest assessors falsely reject at rate $\epsilon_{\mathit{fn}}$, so for $n = 3f{+}1$ this count holds in expectation when $\epsilon_{\mathit{fn}} \leq f/(2f{+}1)$ and all $f$ Byzantine assessors attest in time.
Under worst-case withholding it holds with certainty only when $\epsilon_{\mathit{fn}} = 0$.
Then $\mathsf{Commit}(h)$ is emitted within $O(f)$ view-changes, as inherited from L1.
\looseness=-1 Unless a valid deterministic fraud proof refutes the claim, $\stableqc$ also forms within $O(2\Delta + W(v) + T_{\mathit{DA}} + T_{\mathit{infer}} + f \cdot T_{\mathit{arb}})$ after $\commitqc$, or the decision is flagged \texttt{DISPUTED}.
\end{theorem}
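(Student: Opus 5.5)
The plan is to split the argument into the two halves of the statement: first the L1 commitment, then the stabilization after $\commitqc$. For the first half, I will not reprove anything; I will invoke the liveness guarantee of the ordering layer under \Cref{asm:a1,asm:a2,asm:a3,asm:a4,asm:a5,asm:a6}. The key observation is that, by the Deterministic Wrapper (Invariant~\ref{inv:wrapper}), all L1 messages come from deterministic code. Moreover, the external-validity hook is purely restrictive. Hence after GST an honest leader's proposal passes every honest validity check, and $\mathsf{Commit}(h)$ is emitted within $O(f)$ view-changes exactly as in the underlying BFT protocol (at most $f$ consecutive Byzantine leaders before an honest one).

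For the second half, I will bound each phase in sequence after $\commitqc$.
\begin{enumerate}
\item \textbf{Data availability.} Since the proposal was committed by an honest leader with available evidence, all $n-f\geq 2f{+}1$ honest assessors pass the deterministic $\vstruct$ and send \texttt{DA\_ACK}. This gives $\daqc$ within $T_{\mathit{DA}}+\Delta$.
\item \textbf{Attestation.} Within $T_{\mathit{infer}}+\Delta$, the $2f{+}1$ timely \texttt{SEM\_PASS} attestations assumed in the statement arrive and form $\semqc$. Here I will justify the side condition. With $n=3f{+}1$, the expected number of honest passes is $(2f{+}1)(1-\epsilon_{\mathit{fn}})$, and adding the $f$ Byzantine attesters keeps the total at least $2f{+}1$ exactly when $\epsilon_{\mathit{fn}}\le f/(2f{+}1)$. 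Under withholding, all $2f{+}1$ honest assessors must pass, forcing $\epsilon_{\mathit{fn}}=0$.
\item \textbf{Challenge window.} The window closes after $W(v)$. Admitted counters are filed before $W(v)-\Delta$, so post-GST each reaches every honest assessor before closure.
\end{enumerate}

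The main step, and the one I expect to be hardest, is the counter case analysis. Each admitted counter is delivered via $\counterqc$, and Mandatory Inclusion prevents a leader from dropping it. Each Byzantine assessor files at most one structurally valid counter per window, so at most $f$ Byzantine counters exist; honest counters add only one triage round each. Every counter reaches a terminal state within $T_{\mathit{arb}}$ (Definition~\ref{def:counterlifecycle}), and triage runs sequentially in the worst case, contributing $f\cdot T_{\mathit{arb}}$. The terminal states lead to three outcomes:
\begin{itemize}
\item a valid deterministic fraud proof yields \textsc{refuted}, which is the excluded case;
\item $2f{+}1$ \texttt{COUNTER\_REJECT} votes form $\crqc$ and remove the counter from $\mathsf{unresolvedCounters}$;
\item otherwise the timeout flags \texttt{DISPUTED}.
\end{itemize}
If no counter remains unresolved, all honest assessors sign \texttt{NO\_CHALLENGE}. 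This gives $2f{+}1$ signatures, and $\ncc$ forms within one further $\Delta$.

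Summing the phases, together with the extra $2\Delta$ for message delivery, recovers the stated $O(2\Delta+W(v)+T_{\mathit{DA}}+T_{\mathit{infer}}+f\cdot T_{\mathit{arb}})$ and matches \Cref{eq:delay-bound}. The delicate point is ensuring that counters filed in a later view are not counted twice and do not reset the window. I would handle this by tying admission to the fixed $W(v)$ of the view in which $\commitqc$ formed.
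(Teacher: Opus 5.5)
Your skeleton matches the paper's sketch. Commit liveness is inherited from L1 because \astral adds no evidence-dependent condition to $P_{\mathit{ext}}$; restrictiveness by itself only preserves safety (Remark~\ref{rem:extvalidity:app}), so your ``hence'' needs that fact instead. $\semqc$ comes from the assumed timely attestations, and your expectation calculation is exactly Remark~\ref{rem:semqcliveness}. $\ncc$ forms once every counter is dismissed by a $\crqc$, with \texttt{DISPUTED} as the fallback.

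Your data-availability step, however, rests on a premise you never establish: that the block at height $h$ was committed by an honest leader. L1 liveness only guarantees that \emph{some} block is committed within $O(f)$ view-changes, and a Byzantine leader's proposal that passes $P_{\mathit{ext}}$ may be the one committed. What governs honest \texttt{DA\_ACK}s does not depend on the proposer:
\begin{itemize}
\item evidence availability comes from \Cref{asm:afs};
\item $\vstruct$ is deterministic, so honest assessors agree on it, and a failing block is a no-op for L2 and L3 (a case the statement implicitly sets aside);
\item Mandatory Inclusion makes honest assessors refuse \texttt{DA\_ACK} to any block that omits a known unresolved $\counterqc$.
\end{itemize}
Your \texttt{DA\_ACK} step never addresses this last condition, and it is exactly the one a Byzantine proposer can violate. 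The paper discharges it by noting that deterministic $\counterqc$ forwarding converges within $\Delta$ post-GST. Mandatory Inclusion is therefore satisfiable, and censorship costs at most $O(f)$ view-changes (\Cref{thm:dissemination}).

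Second, your fix for view changes, freezing admission at the $W(v)$ of the commit view, changes the protocol rather than analysing it. The window $W(v)=\max(2\Delta+\epsilon_{\mathit{proc}},\, W_0\cdot v)$ is indexed by the current view, and nothing in the specification pins it to the commit view. You already assume what closes this gap. \Cref{asm:a6} makes post-GST pacemaker timeouts eventually exceed $\Delta+T_{\mathit{DA}}+T_{\mathit{infer}}+W(v)+T_{\mathit{arb}}$, so stabilization completes within one stable view (or its state persists across views). This is how the paper finishes the proof.

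A smaller inconsistency: you say honest counters each add a triage round, yet you charge only $f\cdot T_{\mathit{arb}}$. You can either bound the total number of counters, which gives $O(f\cdot T_{\mathit{arb}})$ only when $n=O(f)$, or use the single deadline $W(v)+T_{\mathit{arb}}$ of \Cref{alg:stabilization}, under which counters are triaged concurrently. The per-counter $T_{\mathit{arb}}$ bound comes from \Cref{asm:a5} and that deadline, not from Definition~\ref{def:counterlifecycle}, which specifies terminal states but no timing.
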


\begin{theorem}[Eventual Counter Dissemination]\label{thm:dissemination}
Post-GST, deterministic $\counterqc$ forwarding and Mandatory Inclusion guarantee that all honest assessors converge on the same counter set within $\Delta$.
Censoring counters delays progress by at most $O(f)$ view-changes.
\end{theorem}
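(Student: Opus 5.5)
The plan is to prove Theorem~\ref{thm:dissemination} in two parts, convergence of the counter set and the bound on censorship. Both parts will reuse quorum intersection from L1 (every two sets of $2f{+}1$ signers share at least $f{+}1$ assessors, hence at least one honest one), the deterministic wrapper of Invariant~\ref{inv:wrapper}, and the admission cutoff at $W(v)-\Delta$.

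\textbf{Convergence.} Fix a post-GST counter that has passed \texttt{VerifyCounterStruct} and collected a $\counterqc$.
\begin{itemize}
\item Among its $2f{+}1$ \texttt{CWIT} signers, at least $f{+}1$ are honest.
\item By the wrapper, each honest witness deterministically forwards the $\counterqc$ to all assessors upon forming or receiving it.
\item After GST every such message arrives within $\Delta$, so every honest assessor receives the $\counterqc$ within $\Delta$ of the first honest assessor holding it.
\item Because verification of $\counterqc$ and of the counter structure is deterministic, each honest recipient adds the counter to $\mathsf{unresolvedCounters}$.
\item By Definition~\ref{def:counterlifecycle}, a counter leaves this set only at a terminal state certified by a quorum, namely a $\crqc$, a sustained fraud proof, or a timeout. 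That certificate is forwarded in the same way, so removals also converge within $\Delta$.
\end{itemize}
Since only counters filed before $W(v)-\Delta$ are admitted, convergence completes before the window closes. Honest assessors therefore hold the same set when deciding whether to sign \texttt{NO\_CHALLENGE}.

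\textbf{Censorship.} A Byzantine leader can only omit a known $\counterqc$ from its proposal.
\begin{itemize}
\item By convergence, every honest assessor knows the unresolved counter and, under Mandatory Inclusion, refuses \texttt{DA\_ACK}.
\item Only $f$ Byzantine signatures remain, so $\daqc$ cannot form and the view times out.
\item The L1 view change rotates the leader. Within $f{+}1$ consecutive views some leader is honest.
\item That honest leader includes all known $\counterqc$s, so censoring delays progress by at most $O(f)$ view changes, as inherited from L1 liveness.
\end{itemize}

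\textbf{Main obstacle.} The hardest step is showing that a counter cannot be selectively revealed, that is, sent to only some honest assessors so that they disagree about \texttt{NO\_CHALLENGE}. I would handle this by insisting that a counter counts as blocking only once a $\counterqc$ exists, and arguing as follows.
\begin{itemize}
\item Any counter held by at least one honest assessor without a certificate can be re-broadcast for witnessing.
\item Any certified counter necessarily has an honest witness that forwards it.
\item The $\Delta$ slack built into the cutoff $W(v)-\Delta$ absorbs the last forwarding hop.
\end{itemize}
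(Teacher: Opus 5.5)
Your proposal is correct and follows the same route the paper indicates. The paper defers the full proof to its artifact repository, but its sketches match yours: the first honest holder deterministically forwards each $\counterqc$, which with the post-GST bound $\Delta$ gives convergence, and honest denial of \texttt{DA\_ACK} under Mandatory Inclusion keeps a censoring leader from forming $\daqc$ until L1's leader rotation reaches an honest leader within $O(f)$ views.

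Two small corrections. First, the honest \texttt{CWIT} signers need not ever receive the assembled $\counterqc$, so the argument should rest on the first honest holder, as your third bullet does, rather than on the witnesses. Second, by Definition~\ref{def:counterlifecycle} only a $\crqc$ removes a counter from $\mathsf{unresolvedCounters}$. \textsc{sustained} and \textsc{disputed} outcomes only prune it from the Mandatory Inclusion set, and \textsc{disputed} is a local timeout rather than a quorum certificate.
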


\emergencystretch=1em Agreement safety is inherited from L1 because $\vstruct$ filters blocks after commitment.
Stable agreement follows because $\commitqc$, $\semqc$, and $\ncc$ must reference the same committed block hash $H(b)$.
Liveness follows because L1 progresses, attestation and triage take bounded time, and a timeout flags the decision \texttt{DISPUTED}.
\textbf{Appendix~\ref{app:censorship}} proves Proposition~\ref{prop:impossibility} and sketches the proof of \Cref{thm:liveness}.
The anonymous artifact repository (\url{https://anonymous.4open.science/r/Astral-55F1}) contains the full formal proofs of \Cref{thm:agreement,thm:stableagreement,thm:liveness,thm:dissemination}.
\textbf{Appendix~\ref{app:protocol:pseudocode}} gives pseudocode and invariants that keep L3 state out of L1 voting (\Cref{inv:isolation}) and give all honest assessors the same input to L3 (\Cref{inv:inputconsistency}).
The certificate chain also makes signature equivocation, invalid evidence pointers, $\counterqc$ censorship, and semantic-layer equivocation provable.
It further supports best-effort detection of evidence suppression.

\section{Evaluation}\label{sec:evaluation}

\looseness=-1 The evaluation follows the three research questions of \S\ref{sec:introduction}.
RQ1 measures how accurately each aggregation rule decides benign Web claims (\S\ref{sec:eval:rq1}).
RQ2 measures how fast integrity degrades under adversarial and correlated assessors (\S\ref{sec:eval:rq2}).
RQ3 measures what the challenge path, the bound, and assessor diversity contribute and what they cost (\S\ref{sec:eval:rq3}).

\subsection{Setup}\label{sec:eval:setup}

\leadtag{Public benchmarks}{tagsetup}{We use fixed evaluation splits with published evidence and labels (\Cref{tab:astral_main}).}
\looseness=-1 \Cref{tab:astral_main} lists claim counts by fixed split.
Here $N$ counts the claims in each fixed split and does not count protocol trials. AVeriTeC and FEVER cover general fact-checking, Climate-FEVER covers climate, PUBHEALTH covers public health, and SciFact pairs scientific claims with abstracts. We replay each claim with its released evidence and ground-truth label.
No benchmark claim, evidence item, or label is generated or rewritten by an LLM.

\leadtag{Synthetic and live cohorts}{tagsetup}{We use two additional cohorts to test controlled mechanisms and mutable evidence.}
Multi-source fact verification (MFV) uses 100 synthetic claims over five topics, half supported and half refuted.
Each claim's difficulty is set by consistency among its three to five evidence excerpts.
The live cohort uses evidence from the NVD Web API for 80 vulnerability claims, 40 about real \texttt{CRITICAL}-severity CVEs and 40 about fabricated ones.
\looseness=-1 Two non-Web domains (code patching and log auditing) appear in \textbf{Appendix~\ref{app:extra}}.

\leadtag{Assessors and harness}{tagsetup}{We form committees of $n \in \{7,\allowbreak 13,\allowbreak 31,\allowbreak 67,\allowbreak 100\}$ assessors from seven model families, using four families at $n{=}7$ (\textbf{Appendix~\ref{app:extra}}). We set $f = \lfloor (n-1)/3 \rfloor$.}
For $n \geq 31$ the committee cycles over the seven families, so the number of independent sources stays at seven.
The public-benchmark runs use a separate $n{=}7$ committee of seven open-weight models (\Cref{tab:astral_main}), namely Qwen2.5-1.5B and Qwen2.5-3B, SmolLM2-1.7B, OLMo-2-1B, TinyLlama-1.1B, StableLM-2-1.6B, and Phi-3.5-mini at pinned revisions, decoding greedily with at most 128 new tokens.
We use $k = 2f{+}1$, $\tau = 0.7$, and seed 42.
In the synthetic and non-Web runs, API models decode greedily and local models use temperature 0.01. Each assessor's response is cached once per claim and reused across protocols.
L2 and L3 are simulated in logical rounds on top of a deterministic L1 stub after GST, so $\Delta$, $W(v)$, $T_{\mathit{infer}}$, $T_{\mathit{DA}}$, and $T_{\mathit{arb}}$ take no wall-clock values.
The challenge window lasts one round.
During this window, every honest assessor whose $\vinfer$ label is $\mathsf{REJECT}$ attempts to file one counter.
A counter enters triage only if it passes \texttt{VerifyCounterStruct}.
The $m$ checkers of \S\ref{sec:design:probability} are therefore the honest assessors ($n{-}f$ in attack trials and all $n$ in benign trials), and we use each checker's Phase~\textcircled{1} label as its detection outcome.
Each assessor casts one triage vote, and a counter with fewer than $2f{+}1$ \texttt{COUNTER\_REJECT} votes sends the decision to \texttt{DISPUTED}.
This triage rule replaces the $T_{\mathit{arb}}$ fallback in the simulation.
\looseness=-1 We select members to obtain diversity and leave the diversity predicate unchecked at attestation time.

\leadtag{Adversarial assessors}{tagthreat}{In attack trials, $f$ assessors are Byzantine and attest with confidence 0.95.}
Prompt or tool injection (A1)~\cite{promptinjection} embeds an instruction that asks for a SUPPORT verdict in the evidence that every assessor reads, and the Byzantine assessors attest the wrong verdict.
In a fabricated proposal (A2), a Byzantine leader proposes a plausible false verdict.
In evidence truncation (A3), the last 30\% of the evidence is removed for every assessor, and the Byzantine assessors endorse the incomplete decision object.
For model-aware corruption (A4), the evidence carries the A1 injection, and we corrupt the $f$ assessors that reject most reliably and leave the weakest models honest.
\looseness=-1 In counter-spam (A5), each Byzantine assessor files one structurally valid but vacuous counter per window.

\leadtag{Baselines and metrics}{tagsetup}{We compare against majority voting (MV) over the assessors' labels and against CP-WBFT~\cite{cpwbft}, which probes confidence and refines labels with confidence-weighted aggregation.}
L1-BFT, which orders verdicts without semantic checks, and an oracle with a perfect $\vinfer$ bound the range.
\astral-NoCh drops the challenge window, \astral-NoDiv uses one backend for all assessors, and Simple Threshold accepts on one assessor's confidence.
In \Cref{tab:astral_main}, ``Single-model (avg.)'' averages seven independent runs that each use one backend. ``w/o Challenge'' is \astral-NoCh, and ``Simple-Thr.'' is the single-assessor threshold baseline.
For \Cref{tab:astral_main}, the semantic correctness success rate (SCSR) is the fraction of benign trials whose final binary decision matches the ground-truth label.
Coverage (Cov.) is the fraction of benign trials that end in a binary \texttt{ACCEPT} or \texttt{REJECT} instead of \texttt{DISPUTED}.
Both SCSR and Cov.\ use all benign trials as the denominator. A \texttt{DISPUTED} outcome is not a binary decision, so it never counts as correct, and SCSR never exceeds Cov.
Each A1--A4 column reports the attack success rate (ASR), counting a refuted claim as an attack success only when the final output is \texttt{ACCEPT}.
A \texttt{DISPUTED} outcome lowers coverage and suspends any action taken at $\commitqc$.
The pooled ASR aggregates accepted refuted claims over all four attacks, and Disp.\ is the fraction of attacked trials that end in \texttt{DISPUTED}.

\subsection{RQ1: Reliability under Benign Disagreement}\label{sec:eval:rq1}

\begin{table}[t]
\centering
\caption{Main results on five public benchmarks using seven open-weight assessors. $N$ is the claim count in the fixed evaluation split. SCSR and Cov.\ are clean metrics, A1--A4 are attack success rates, and pooled ASR aggregates the four attacks.}
\label{tab:astral_main}
\begingroup
\setlength{\tabcolsep}{2pt}
\renewcommand{\arraystretch}{0.9}
\footnotesize
\newcommand{\ASRCI}[3]{\mbox{#1\,{\scriptsize\color[gray]{0.42}[#2,#3]}}}
\resizebox{\ifdim\width>\columnwidth\columnwidth\else\width\fi}{!}{%
\begin{tabular}{l*{8}{r}}
\toprule
& \multicolumn{2}{c}{\textbf{Clean}} & \multicolumn{6}{c}{\textbf{Byzantine attacks (\%)}} \\
\cmidrule(lr){2-3}\cmidrule(l){4-9}
\textbf{Method} & \textbf{SCSR}$\uparrow$ & \textbf{Cov.}$\uparrow$ & \textbf{A1}$\downarrow$ & \textbf{A2}$\downarrow$ & \textbf{A3}$\downarrow$ & \textbf{A4}$\downarrow$ & \textbf{ASR}$\downarrow$ & \textbf{Disp.}$^{\dagger}$ \\
\midrule
\rowcolor[RGB]{240,240,240}\multicolumn{9}{c}{\rule{0pt}{9pt}\textbf{AVeriTeC}\ {\normalfont\color[gray]{0.42}($N=500$)}} \\
L1-BFT & 24.4 & \textbf{100.0} & 100.0 & 100.0 & 100.0 & 100.0 & \ASRCI{100.0}{99.8}{100.0} & 0.0 \\
MV & \textbf{88.2} & \textbf{100.0} & 96.6 & 17.5 & 26.7 & 100.0 & \ASRCI{60.2}{57.7}{62.7} & 0.0 \\
CP-WBFT & 50.6 & \textbf{100.0} & 97.6 & 50.3 & 51.9 & 100.0 & \ASRCI{74.9}{72.7}{77.1} & 0.0 \\
Simple-Thr. & 75.6 & \textbf{100.0} & 73.0 & 40.5 & 45.0 & 85.4 & \ASRCI{61.0}{58.5}{63.4} & 0.0 \\
\cmidrule(lr){1-9}
Single-model (avg.) & 56.1 & 76.1 & 57.1 & 15.8 & 20.4 & 57.1 & 37.6 & 17.3 \\
w/o Challenge & 86.0 & \textbf{100.0} & 64.6 & 3.4 & 9.5 & 99.7 & \ASRCI{44.3}{41.8}{46.9} & 0.0 \\
\rowcolor[RGB]{229,236,250} \textbf{ASTRAL} & 14.0 & 16.0 & \textbf{27.2} & \textbf{1.6} & \textbf{6.1} & \textbf{51.1} & \ASRCI{\textbf{21.5}}{19.4}{23.7} & 50.5 \\
\addlinespace[3pt]
\rowcolor[RGB]{240,240,240}\multicolumn{9}{c}{\rule{0pt}{9pt}\textbf{Climate-FEVER}\ {\normalfont\color[gray]{0.42}($N=1{,}535$)}} \\
L1-BFT & 42.6 & \textbf{100.0} & 100.0 & 100.0 & 100.0 & 100.0 & \ASRCI{100.0}{99.9}{100.0} & 0.0 \\
MV & \textbf{70.4} & \textbf{100.0} & 99.3 & 42.3 & 45.1 & 100.0 & \ASRCI{71.7}{70.2}{73.2} & 0.0 \\
CP-WBFT & 46.3 & \textbf{100.0} & 99.3 & 67.4 & 74.2 & 100.0 & \ASRCI{85.2}{84.0}{86.4} & 0.0 \\
Simple-Thr. & 61.8 & \textbf{100.0} & 77.8 & 52.9 & 47.3 & 90.8 & \ASRCI{67.2}{65.6}{68.7} & 0.0 \\
\cmidrule(lr){1-9}
Single-model (avg.) & 54.0 & 87.4 & 68.5 & 29.0 & 29.6 & \textbf{68.5} & 48.9 & 9.6 \\
w/o Challenge & 67.9 & \textbf{100.0} & 84.3 & 14.9 & 14.8 & 100.0 & \ASRCI{53.5}{51.8}{55.1} & 0.0 \\
\rowcolor[RGB]{229,236,250} \textbf{ASTRAL} & 26.2 & 37.5 & \textbf{61.1} & \textbf{11.4} & \textbf{11.7} & 87.9 & \ASRCI{\textbf{43.0}}{41.3}{44.6} & 26.7 \\
\addlinespace[3pt]
\rowcolor[RGB]{240,240,240}\multicolumn{9}{c}{\rule{0pt}{9pt}\textbf{FEVER}\ {\normalfont\color[gray]{0.42}($N=15{,}935$)}} \\
L1-BFT & 29.1 & \textbf{100.0} & 100.0 & 100.0 & 100.0 & 100.0 & \ASRCI{100.0}{100.0}{100.0} & 0.0 \\
MV & 83.0 & \textbf{100.0} & 99.5 & 54.0 & 47.6 & 100.0 & \ASRCI{75.3}{74.9}{75.7} & 0.0 \\
CP-WBFT & 44.3 & \textbf{100.0} & 99.7 & 66.5 & 57.8 & 100.0 & \ASRCI{81.0}{80.6}{81.3} & 0.0 \\
Simple-Thr. & 67.5 & \textbf{100.0} & 75.7 & 52.3 & 49.6 & 91.3 & \ASRCI{67.2}{66.8}{67.6} & 0.0 \\
\cmidrule(lr){1-9}
Single-model (avg.) & 50.2 & 80.7 & 66.3 & 33.3 & 29.5 & \textbf{66.3} & 48.9 & 14.0 \\
w/o Challenge & \textbf{84.5} & \textbf{100.0} & 83.1 & 20.6 & 15.1 & 100.0 & \ASRCI{54.7}{54.2}{55.2} & 0.0 \\
\rowcolor[RGB]{229,236,250} \textbf{ASTRAL} & 27.5 & 33.3 & \textbf{53.4} & \textbf{15.0} & \textbf{12.4} & 67.2 & \ASRCI{\textbf{37.0}}{36.6}{37.4} & 31.9 \\
\addlinespace[3pt]
\rowcolor[RGB]{240,240,240}\multicolumn{9}{c}{\rule{0pt}{9pt}\textbf{PUBHEALTH}\ {\normalfont\color[gray]{0.42}($N=1{,}220$)}} \\
L1-BFT & 51.6 & \textbf{100.0} & 100.0 & 100.0 & 100.0 & 100.0 & \ASRCI{100.0}{99.8}{100.0} & 0.0 \\
MV & 62.8 & \textbf{100.0} & 97.0 & 7.8 & 14.2 & 100.0 & \ASRCI{54.7}{52.7}{56.8} & 0.0 \\
CP-WBFT & \textbf{72.6} & \textbf{100.0} & 96.4 & 35.5 & 51.3 & 100.0 & \ASRCI{70.8}{68.9}{72.6} & 0.0 \\
Simple-Thr. & 62.5 & \textbf{100.0} & 70.9 & 30.8 & 43.5 & 83.1 & \ASRCI{57.1}{55.0}{59.1} & 0.0 \\
\cmidrule(lr){1-9}
Single-model (avg.) & 57.2 & 90.6 & 57.9 & 8.0 & 12.9 & \textbf{57.9} & \textbf{34.2} & 5.9 \\
w/o Challenge & 55.5 & \textbf{100.0} & 66.5 & \textbf{2.4} & 4.1 & 100.0 & \ASRCI{43.2}{41.2}{45.3} & 0.0 \\
\rowcolor[RGB]{229,236,250} \textbf{ASTRAL} & 29.6 & 55.9 & \textbf{45.7} & \textbf{2.4} & \textbf{3.7} & 92.6 & \ASRCI{36.1}{34.1}{38.1} & 16.8 \\
\addlinespace[3pt]
\rowcolor[RGB]{240,240,240}\multicolumn{9}{c}{\rule{0pt}{9pt}\textbf{SciFact}\ {\normalfont\color[gray]{0.42}($N=300$)}} \\
L1-BFT & 41.3 & \textbf{100.0} & 100.0 & 100.0 & 100.0 & 100.0 & \ASRCI{100.0}{99.5}{100.0} & 0.0 \\
MV & 63.7 & \textbf{100.0} & 99.4 & 71.0 & 40.9 & 100.0 & \ASRCI{77.8}{74.6}{80.9} & 0.0 \\
CP-WBFT & 42.0 & \textbf{100.0} & 97.2 & 86.9 & 56.8 & 100.0 & \ASRCI{85.2}{82.4}{87.8} & 0.0 \\
Simple-Thr. & 60.7 & \textbf{100.0} & 82.4 & 59.1 & 50.6 & 94.3 & \ASRCI{71.6}{68.1}{74.9} & 0.0 \\
\cmidrule(lr){1-9}
Single-model (avg.) & 53.1 & 93.0 & 72.5 & 45.7 & 27.8 & \textbf{72.5} & 54.6 & 7.5 \\
w/o Challenge & \textbf{70.0} & \textbf{100.0} & 90.9 & 41.5 & 11.4 & 100.0 & \ASRCI{60.9}{57.2}{64.6} & 0.0 \\
\rowcolor[RGB]{229,236,250} \textbf{ASTRAL} & 43.0 & 61.7 & \textbf{72.2} & \textbf{36.4} & \textbf{10.2} & 93.8 & \ASRCI{\textbf{53.1}}{49.4}{56.9} & 16.5 \\
\bottomrule
\end{tabular}}
\endgroup
\vspace{2pt}
\parbox{\columnwidth}{\scriptsize All metrics in \%. $\uparrow$/$\downarrow$ mark the better direction, and bold marks the best value within each dataset (ties included). $^{\dagger}$Disp.\ is the disputed-output rate and has no preferred direction. Brackets give Clopper--Pearson 95\% CIs for the pooled ASR, whose bounds lie within 1.2 points of claim-clustered bootstrap CIs. Single-model (avg.) averages seven single-backend runs without CI, pooled ASR ranges from 1.4--8.0\% for the best backend to 57.5--90.1\% for the worst, and A4 equals A1 because one backend leaves no assessor to select.}
\end{table}

\begin{figure}[t]
\centering
\includegraphics[width=\linewidth]{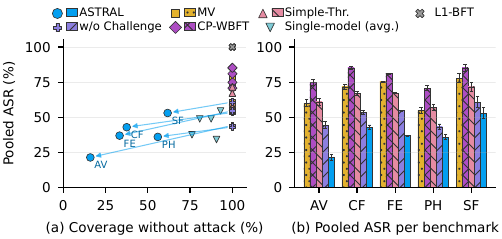}
\caption{Pooled ASR under A1--A4 on the public benchmarks of \Cref{tab:astral_main} (AV = AVeriTeC, CF = Climate-FEVER, FE = FEVER, PH = PUBHEALTH, SF = SciFact). (a) Coverage without attack against pooled ASR, with arrows from \astral-NoCh to \astral. (b) Pooled ASR with Clopper--Pearson 95\% CIs, omitting Single-model and L1-BFT.}
\label{fig:tradeoff}
\end{figure}

Without adversaries, label aggregation always reaches a binary verdict on the five public benchmarks, while \astral reaches one on 16.0--61.7\% of claims and leaves the remaining 38.3--84.0\% \texttt{DISPUTED} (\Cref{tab:astral_main} and \Cref{fig:tradeoff}a).
Only 6.8--30.3\% of claims reach accountable acceptance with a StableQC, and the other binary verdicts are rejections without SemQC.
The whole loss comes from the challenge window.
\astral-NoCh uses the same $2f{+}1$ quorum without a window, never outputs \texttt{DISPUTED}, and decides every claim.
Because disputed claims never count as correct, \astral's SCSR is 20.7--74.2 points below MV's on all five benchmarks.
On decided claims, \astral is correct 52.9--87.5\% of the time, against 62.8--88.2\% for MV.
\looseness=-1 On the synthetic suite, \astral's SCSR falls from 43.3\% at $n{=}7$ to 32.9\% at $n \geq 13$, with coverage of 43.8\% and 33.3\%, while MV and CP-WBFT decide every claim with SCSR of 93.8\% and 94.3\% (\textbf{Appendix~\ref{app:extra}}).

\takeawaybox{Under benign disagreement, \astral withholds a binary verdict on 38.3--84.0\% of public-benchmark claims, and the challenge window accounts for all of that loss because the $2f{+}1$ quorum alone decides every claim.}

\subsection{RQ2: Robustness to Adversarial Assessors}\label{sec:eval:rq2}

\begin{figure}[t]
\centering
\includegraphics[width=\linewidth]{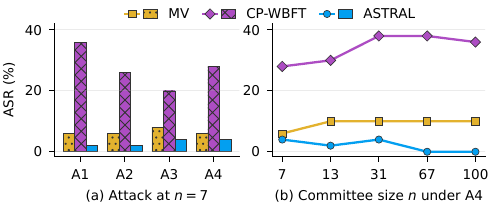}
\caption{Attack success rate on MFV multi-source claims over 50 refuted claims per point (4\% has 95\% CI [0.5, 13.7]). (a) Each attack at $n{=}7$. (b) A4 across committee size $n$.}
\label{fig:mfv}
\end{figure}

\leadtag{Adversarial assessors}{tagthreat}{With two of seven assessors adversarial, \astral has the lowest pooled ASR among the aggregation protocols on every public benchmark. Its pooled ASR is 21.5--53.1\% against 54.7--77.8\% for MV and 70.8--85.2\% for CP-WBFT (\Cref{tab:astral_main} and \Cref{fig:tradeoff}b).}
L1-BFT accepts every refuted claim, which shows that ordering without a semantic check offers no protection.
At the same confidence threshold, Simple Threshold reaches 57.1--71.6\% pooled ASR.
With the $2f{+}1$ quorum, \astral-NoCh lowers pooled ASR by 10.7--16.7 points relative to Simple Threshold. Adding the challenge window lowers it a further 7.1--22.8 points and attaches a signed record to every decision (\Cref{fig:tradeoff}a).
At \astral's benign coverage, MV that abstains below a confidence threshold still accepts 54.7--77.7\% of refuted claims, so abstention alone does not explain \astral's lower ASR (\Cref{tab:matched_cov}).
Model-aware corruption (A4) remains the strongest attack, as \astral accepts 51.1--93.8\% of refuted claims under A4.
On PUBHEALTH, the average single-backend run reaches a pooled ASR of 34.2\%, 1.9 points below \astral.
CP-WBFT is weakest where one confident assessor can move the others.
On MFV at $n{=}7$, CP-WBFT accepts 20--36\% of refuted claims because one assessor reporting confidence 0.95 can shift honest assessors to that assessor's label during refinement (\Cref{fig:mfv}a).
\astral's attack success on MFV is 2--4\%.

\leadtag{Correlated assessors}{tagsemantic}{Growing the committee by replicating the same seven families adds assessors whose errors coincide.}
As $n$ grows under model-aware corruption, attack success on MFV rises from 6\% to 10\% for MV and from 28\% to 36--38\% for CP-WBFT, while \astral stays at 0--4\% (\Cref{fig:mfv}b).
In log auditing, the same replication raises MV's attack success to 88\% at $n \geq 13$ (\textbf{Appendix~\ref{app:extra}}).
% Required in the paper preamble: \usepackage{booktabs}\usepackage{colortbl}
% If your template already loads xcolor, do not load it again with new options.
\leadtag{Live Web evidence}{tagsemantic}{On the 80 live vulnerability claims at $n{=}7$, \astral supports 39 of 40 real CVEs (97.5\%, 95\% CI [86.8, 99.9]) and rejects all 40 fabricated ones, both without attack and under A1.}
When one assessor is degraded to reject conservatively, the protocol supports 38 of 40 real CVEs without attack and 39 of 40 under A1, and still accepts no fabricated CVE.

\takeawaybox{When a minority of assessors is adversarial or correlated, label aggregation turns their errors into accepted Web verdicts, while \astral lowers pooled attack success by 18.6--38.7 points relative to MV on public benchmarks and keeps it at 0--4\% on multi-source claims.}

\subsection{RQ3: Accountability, Diversity, and Cost}\label{sec:eval:rq3}

\begin{table}[t]
\centering
\caption{Assessor correlation on MFV ($\tau{=}0.7$) over 50 trials for one fixed group of $s$ honest assessors per configuration, with group-mean false-accept rate $\bar{\epsilon}$ and observed ratio $\hat{\rho}_s$. $s$ grows with the family count, and -- marks no joint false acceptance.}\label{tab:rho}
\footnotesize
\resizebox{\ifdim\width>\columnwidth\columnwidth\else\width\fi}{!}{
\begin{tabular}{@{}lcccc@{}}
\toprule
\textbf{Diversity configuration} & \textbf{$s$} & \textbf{$\Pr[\cap A_i]$} & \textbf{$\bar{\epsilon}^{s}$} & \textbf{$\hat{\rho}_s$} \\
\midrule
Same model family      & 2 & $2.0 \times 10^{-2}$ & $4.0 \times 10^{-4}$ & 50.0 \\
Mixed 2 families       & 2 & $2.0 \times 10^{-2}$ & $9.0 \times 10^{-4}$ & 22.2 \\
Mixed 3 families       & \astralsoft{3} & \astralsoft{$0$} & \astralsoft{$3.7 \times 10^{-5}$} & \astralsoft{--} \\
Full diversity (4+)    & \astralcell{4} & \astralcell{$0$} & \astralcell{$1.5 \times 10^{-6}$} & \astralcell{--} \\
\bottomrule
\end{tabular}
}
\end{table}

\begin{table*}[t]
\centering
\caption{Decision-integrity approaches compared, with measured results from \Cref{tab:astral_main}, \Cref{fig:mfv}, and \Cref{tab:protocol-comparison}.}
\label{tab:novelty}
\scriptsize
\setlength{\tabcolsep}{2.2pt}
\renewcommand{\arraystretch}{0.92}
\begin{tabular}{l|c|cccc|cccc|ccccc|c}
\toprule
& & \multicolumn{4}{c|}{\textbf{Decision guarantees}} & \multicolumn{4}{c|}{\textbf{Accountability}} & \multicolumn{5}{c|}{\textbf{Measured (\%, tokens)}} & \\
\cmidrule(lr){3-6}\cmidrule(lr){7-10}\cmidrule(lr){11-15}
\textbf{System} & \textbf{Paradigm} &
\makecell{Ordering\\agreement} &
\makecell{Cert.\\finality} &
\makecell{Semantic\\attestation} &
\makecell{Evidence\\provenance} &
\makecell{Network\\forensics} &
\makecell{Challenge\\mechanism} &
\makecell{Counter\\triage} &
\makecell{Dist.\\bounds} &
\makecell{Clean\\SCSR$\uparrow$} &
\makecell{Cov.\\w/o attack$\uparrow$} &
\makecell{Pooled\\ASR$\downarrow$} &
\makecell{MFV A4\\ASR$\downarrow$} &
\makecell{Tokens/\\decision} &
\makecell{\textbf{Capab.}\\\textbf{coverage}}
\\
\midrule
L1-BFT~\cite{pbft,tendermint,hotstuff} & BFT & \cmark & \cmark & \xmark & \xmark & \xmark & \xmark & \xmark & \xmark & 24.4--51.6 & 100.0 & 100.0 & -- & -- & 2/8 \\
Accountable BFT~\cite{accountablebft2026} & BFT & \cmark & \cmark & \xmark & \xmark & \cmark & \xmark & \xmark & \xmark & -- & -- & -- & -- & -- & 3/8 \\
Optimistic rollups~\cite{rollupsurvey,arbitrum} & Fraud proof & \cmark & \cmark & \xmark & \pmark & \pmark & \cmark & \xmark & \xmark & -- & -- & -- & -- & -- & 3+2/8 \\
\midrule
Single-model (avg.) & Single & \xmark & \xmark & \xmark & \xmark & \xmark & \xmark & \xmark & \xmark & 50.2--57.2 & 76.1--93.0 & 34.2--54.6 & -- & -- & 0/8 \\
Simple Threshold & Threshold & \xmark & \xmark & \xmark & \xmark & \xmark & \xmark & \xmark & \xmark & 60.7--75.6 & 100.0 & 57.1--71.6 & -- & -- & 0/8 \\
Majority vote~\cite{coordllm,consensusdetection} & Vote & \xmark & \xmark & \pmark & \xmark & \xmark & \xmark & \xmark & \xmark & 62.8--88.2 & 100.0 & 54.7--77.8 & 6.0 & 8.1K & 0+1/8 \\
Debate~\cite{multiagentdebate} & Debate & \xmark & \xmark & \pmark & \xmark & \xmark & \xmark & \xmark & \xmark & -- & -- & -- & -- & -- & 0+1/8 \\
CP-WBFT~\cite{cpwbft} & Conf.-weighted & \cmark & \xmark & \pmark & \xmark & \xmark & \xmark & \xmark & \xmark & 42.0--72.6 & 100.0 & 70.8--85.2 & 28.0 & 16.4K & 1+1/8 \\
\midrule
\astral-NoCh & Sem.\ consensus & \cmark & \cmark & \cmark & \cmark & \cmark & \xmark & \xmark & \pmark & 55.5--86.0 & 100.0 & 43.2--60.9 & -- & -- & 5+1/8 \\
\rowcolor{paperaccentsoft}\textbf{\astral} & \textbf{Sem.\ consensus} & \cmark & \cmark & \cmark & \cmark & \cmark & \cmark & \cmark & \pmark & 14.0--43.0 & 16.0--61.7 & \textbf{21.5--53.1} & \textbf{4.0} & 12.0K & \textbf{7+1/8} \\
\bottomrule
\end{tabular}
\par\smallskip
\parbox{\textwidth}{\scriptsize
\cmark\ = supported, \pmark\ = partial, \xmark\ = not addressed, -- = not evaluated. Coverage counts full$+$partial of eight. Rollups resolve challenges by deterministic re-execution, and fraud proofs convict a wrong state root. Ranges span five public benchmarks with two of seven assessors adversarial. MFV A4 ASR and tokens (code patching) are at $n{=}7$. \astral's bound is not evaluated end to end.}
\end{table*}

\leadtag{Diversity and the bound}{tagsemantic}{On MFV, groups that span more model families show fewer joint false acceptances (\Cref{tab:rho}).}
\looseness=-1 A committee from one model family produced one joint false acceptance in 50 trials. The observed ratio $\hat{\rho}_s = 50$ has a wide interval because it rests on that single event.
Two families also produced one joint false acceptance ($\hat{\rho}_s = 22.2$), and three or more families produced none (one-sided 95\% upper bound 0.06 on the joint false-acceptance rate).
These joint false-acceptance rates exclude the challenge factor of \Cref{eq:delta}.
\textbf{Appendix~\ref{app:extra}} reports the unconditional rate at which counters are filed under attack, which is only the first stage of $p_{\mathit{ch}}$, whereas the bound conditions $p_{\mathit{ch}}$ on $Q$.
The conditional probability can be lower, so we do not evaluate the bound end to end.
In the MFV ablation under A2, attestation alone lowers attack success from 100\% to 6\%, and adding the challenge window lowers it to 2\% (\textbf{Appendix~\ref{app:extra}}).
\astral-NoDiv reaches 2.0--4.0\% depending on the backend (3.5\% on average), and its single-family ratio of 50 indicates strongly correlated errors. Simple Threshold, which faces the same Byzantine assessors, accepts 40.0\% of refuted MFV claims under A2.
On the public benchmarks, diversity hedges against a weak backend but does not beat the strongest one. Single-backend committees reach 1.4--8.0\% pooled ASR with the best model and 57.5--90.1\% with the worst, against 21.5--53.1\% for \astral's mixed committee (\Cref{tab:astral_main}).

\leadtag{Accountability}{tagdefense}{The challenge path records who disagrees instead of returning a label without an attributable record.}
At $n{=}7$ under attacks A1--A4, counters are filed in 56.7--59.0\% of trials, $\crqc$ resolution occurs in 5.0--8.9\%, and 53.8--55.7\% of decisions end as \texttt{DISPUTED}.
Pooled over the synthetic domains, the filing rate is 58.3\% at $n{=}7$ and 67.0\% at $n{=}31$ (\Cref{tab:protocol-comparison}).

\leadtag{Cost and liveness}{tagsetup}{\astral spends tokens and liveness to keep false acceptance low.}
On the code-patching domain at $n{=}7$, a decision costs 8.1K tokens under MV, 16.4K under CP-WBFT, and 12.0K under \astral (\Cref{tab:protocol-comparison}).
At $n{=}31$ the costs are 36.9K, 180.0K, and 172.7K.
On the same domain with 31 assessors, MV's attack success under A4 is 46.7\% against 0\% for \astral.
Under counter-spam (A5), varying each Byzantine assessor's counter budget $K$ over 0, 1, and $f$ (with $K{>}1$ lifting the one-counter limit) yields 2.7--6.7 counters per trial at $n{=}7$ and 11.4--111.4 at $n{=}31$.
For every $K$, 55.2\% and 63.3\% of decisions end as \texttt{DISPUTED}, with 2.2 and 7.7 sustained counters per trial, respectively.
Byzantine assessors voting \texttt{COUNTER\_SUSTAIN} during triage are therefore the bottleneck, and mitigations such as weighting votes or relaxing $\crqc$ thresholds should target these votes.
Counter-spam costs liveness, and a decision it pushes to \texttt{DISPUTED} carries no $\stableqc$.
\looseness=-1 Of the attacked synthetic trials, which mix supported and refuted claims, 43.9\% reach stable finality at $n{=}7$ and 54.6\% are flagged for audit.

\takeawaybox{\looseness=-1 \astral fails conservatively. Groups of three or more families had no joint false acceptance on multi-source claims, diversity hedges against a weak backend, and under attack \astral flags more verdicts as \texttt{DISPUTED}.}

\leadtag{L1 composition}{tagsetup}{\astral applies to an L1 that provides agreement, total ordering, and an external-validity predicate, including DAG-based protocols in which evidence binds to vertex hashes.}\label{sec:discussion:generality}\label{sec:eval:l1comp}
\looseness=-1 For probabilistic L1s with $k$-deep confirmation, $\delta_{\mathit{joint}} \leq \delta_{\text{L3}} + p_{\mathit{rev}}$, and in simulated L1 reversion the joint-failure rates track this bound, at $0.71\%$ vs.\ $0.72\%$ for $p_{\mathit{rev}}{=}10^{-4}$ and $5.5\%$ vs.\ $5.7\%$ for $p_{\mathit{rev}}{=}0.05$.

\leadtag{Limitations}{tagthreat}{MFV is synthetic, and the five public benchmarks are static corpora.}
\looseness=-1 The live CVE cohort of 80 claims is the only evaluation with mutable Web evidence and evaluates \astral without baselines.
The bound is distributional and gives no guarantee against adaptive adversaries who model every assessor.
Membership is static. In A1--A4, attack payloads are structural and are not optimized against specific assessors. In A1 and A4 honest assessors also read the injected evidence and do not count toward $f$.
Content that steers more than $f$ assessors to the same wrong label lies outside the quorum guarantee and appears only through $\epsilon$ and $\rho_{k_h}$ (\S\ref{sec:threat:faults}).
\astral bounds risk conditional on the committed evidence, so evidence that retrieval never collected is beyond its scope.
The evaluated Byzantine assessors withhold neither \texttt{DA\_ACK} nor \texttt{SEM\_PASS}, which leaves the withholding case of \Cref{thm:liveness} untested.
Latency is measured only in logical rounds.
Assessor diversity is self-declared in our setting. Such tags can be spoofed, and models from the same family may share failure modes.\label{sec:discussion:diversity}
Periodic known-answer probes can test behavioral diversity, high-stakes deployments can use trusted hardware to attest the served model, and the correlation term can be calibrated for each deployment (\Cref{tab:rho}).

\section{Related Work}\label{sec:background}

\leadtag{Web information integrity and fact-checking}{tagsetup}{Benchmarks such as FEVER~\cite{thorne2018fever} and SciFact~\cite{wadden2020scifact} pair claims with retrieved evidence to evaluate automated fact verification on the Web.}
\looseness=-1 LLMs generate fluent misinformation at scale and are also adopted as assistants that detect and annotate it~\cite{factualitychallenges,misinfollm}.
Work in this line evaluates or improves individual verifiers, and systems such as ProveRAG~\cite{proverag} use retrieval to reduce hallucination within one agent.
A single-agent verifier can supply an assessor's $\vinfer$ in an \astral quorum.
\astral addresses how verdicts from several such verifiers are combined into one Web decision.

\leadtag{LLMs as assessors and their aggregation}{tagsemantic}{LLM judges are validated mainly by agreement with human assessors~\cite{llmjudge}. Multi-agent systems~\cite{autogen,metagpt,chatdev} resolve disagreement among LLM assessors by voting~\cite{coordllm,consensusdetection} or debate~\cite{multiagentdebate}.}
Voting and debate assume benign assessors.
Sycophancy makes agents defer to perceived authority~\cite{sycophancy}, fixed personas reduce epistemic diversity~\cite{beliefboxes}, and indirect prompt injection lets retrieved Web content steer an assessor~\cite{promptinjection}.
As a result, one compromised or confident assessor can propagate a wrong verdict through the group.
CP-WBFT~\cite{cpwbft} examines reliability in multi-agent systems from a Byzantine perspective by probing confidence and weighting verdicts during aggregation.
Our evaluation shows that one confident assessor can exploit how CP-WBFT weights verdicts by confidence (\S\ref{sec:eval:rq2}).

\leadtag{Accountable agreement and challenge-based finality}{tagdefense}{Classical Byzantine fault-tolerant (BFT) protocols~\cite{pbft,hotstuff,dls,lamport1998paxos} and their successors~\cite{asyncbft2025,groupbft2025,ipotane2026,accountablebft2026} agree on bytes and attribute network-level equivocation.}
A signed proposal that contains a false claim still passes all of their checks.
Blockchain oracles~\cite{chainlink2,oraclesurvey} aggregate independent data feeds without a challenge path, and prediction markets~\cite{augur} rely on economic incentives over long horizons.
Optimistic rollups~\cite{rollupsurvey,arbitrum} use challenges against deterministic execution to reach finality. \astral adapts these challenges by having stochastic assessors triage counters.
Verifiable computation~\cite{pinocchio,stark,zkml} and trusted hardware~\cite{sgx,slalom} prove that a computation ran as specified but do not prove that its conclusion about a claim is right.
Within one agent or orchestrator, evidence-binding frameworks~\cite{evibound,proofcarrying} gate outputs on evidence.
\looseness=-1 We are not aware of work that jointly pins Web evidence, requires a quorum of attestations that pass a confidence threshold, and keeps LLM assessors' verdicts open to challenge before accountable acceptance (\Cref{tab:novelty}).

\section{Conclusion}\label{sec:conclusion}

\looseness=-1 We presented \astral, an accountable semantic consensus layer for AI-mediated Web verification.
When the output sequences of independent LLM assessors never separate valid from invalid claims, a protocol that rejects invalid claims also falsely rejects valid ones with positive probability.
\astral pins Web evidence, counts attestations above a confidence threshold, and admits signed challenges before accountable acceptance.
With two of seven assessors adversarial, it lowered pooled attack success by 18.6--38.7 points relative to majority voting on five public benchmarks.
The cost was coverage, as without adversaries 16.0--61.7\% of public claims reached a binary verdict and the rest were disputed for audit.
Groups of three or more model families showed no joint false acceptance on multi-source claims, and diversity hedged against a weak backend on public benchmarks.
These results support Web verification layers whose verdicts carry an attestation record and remain open to challenge.

\clearpage

\bibliographystyle{ACM-Reference-Format}
\bibliography{refs_short}

@inproceedings{hotstuff,
  author    = {Yin, Maofan and Malkhi, Dahlia and Reiter, Michael K. and Gueta, Guy Golan and Abraham, Ittai},
  title     = {{HotStuff}: {BFT} Consensus with Linearity and Responsiveness},
  booktitle = {PODC},
  year      = {2019},
  pages     = {347--356},
}

@inproceedings{pbft,
  author    = {Castro, Miguel and Liskov, Barbara},
  title     = {Practical {Byzantine} Fault Tolerance},
  booktitle = {OSDI},
  year      = {1999},
  pages     = {173--186},
}

@mastersthesis{tendermint,
  author  = {Buchman, Ethan},
  title   = {Tendermint: {Byzantine} Fault Tolerance in the Age of Blockchains},
  school  = {University of Guelph},
  year    = {2016},
}

@article{dls,
  author  = {Dwork, Cynthia and Lynch, Nancy and Stockmeyer, Larry},
  title   = {Consensus in the Presence of Partial Synchrony},
  journal = {JACM},
  volume  = {35},
  number  = {2},
  pages   = {288--323},
  year    = {1988},
}

@article{lamport1998paxos,
  author  = {Lamport, Leslie},
  title   = {The Part-Time Parliament},
  journal = {TOCS},
  volume  = {16},
  number  = {2},
  pages   = {133--169},
  year    = {1998},
}

@inproceedings{autogen,
  author    = {Wu, Qingyun and Bansal, Gagan and Zhang, Jieyu and Wu, Yiran and Li, Beibin and Zhu, Erkang and Jiang, Li and Zhang, Xiaoyun and Zhang, Shaokun and Liu, Jiale and Awadallah, Ahmed Hassan and White, Ryen W. and Burger, Doug and Wang, Chi},
  title     = {{AutoGen}: Enabling Next-Gen {LLM} Applications via Multi-Agent Conversations},
  booktitle = {COLM},
  year      = {2024},
}

@inproceedings{chatdev,
  author    = {Qian, Chen and Liu, Wei and Liu, Hongzhang and Chen, Nuo and Dang, Yufan and Li, Jiahao and Yang, Cheng and Chen, Weize and Su, Yusheng and Cong, Xin and Xu, Juyuan and Li, Dahai and Liu, Zhiyuan and Sun, Maosong},
  title     = {{ChatDev}: Communicative Agents for Software Development},
  booktitle = {ACL},
  year      = {2024},
  pages     = {15174--15186},
}

@inproceedings{metagpt,
  author    = {Hong, Sirui and Zhuge, Mingchen and Chen, Jonathan and Zheng, Xiawu and Cheng, Yuheng and Zhang, Ceyao and Wang, Jinlin and Wang, Zili and Yau, Steven Ka Shing and Lin, Zijuan and Zhou, Liyang and Ran, Chenyu and Xiao, Lingfeng and Wu, Chenglin and Schmidhuber, J\"{u}rgen},
  title     = {{MetaGPT}: Meta Programming for A Multi-Agent Collaborative Framework},
  booktitle = {ICLR},
  year      = {2024},
}

@inproceedings{multiagentdebate,
  author    = {Du, Yilun and Li, Shuang and Torralba, Antonio and Tenenbaum, Joshua B. and Mordatch, Igor},
  title     = {Improving Factuality and Reasoning in Language Models through Multiagent Debate},
  booktitle = {ICML},
  year      = {2024},
  pages     = {11733--11763},
}

@inproceedings{sycophancy,
  author    = {Sharma, Mrinank and Tong, Meg and Korbak, Tomasz and Duvenaud, David and Askell, Amanda and Bowman, Samuel R. and Cheng, Newton and Durmus, Esin and Hatfield-Dodds, Zac and Johnston, Scott R. and Kravec, Shauna and Maxwell, Timothy and McCandlish, Sam and Ndousse, Kamal and Rausch, Oliver and Schiefer, Nicholas and Yan, Da and Zhang, Miranda and Perez, Ethan},
  title     = {Towards Understanding Sycophancy in Language Models},
  booktitle = {ICLR},
  year      = {2024},
}

@inproceedings{promptinjection,
  author    = {Greshake, Kai and Abdelnabi, Sahar and Mishra, Shailesh and Endres, Christoph and Holz, Thorsten and Fritz, Mario},
  title     = {Not What You've Signed Up For: Compromising Real-World {LLM}-Integrated Applications with Indirect Prompt Injection},
  booktitle = {Proceedings of the 16th ACM Workshop on Artificial Intelligence and Security (AISec)},
  year      = {2023},
  pages     = {79--90},
  doi       = {10.1145/3605764.3623985},
}

@inproceedings{accountablebft2026,
  author    = {Civit, Pierre and Collins, Daniel and Gramoli, Vincent and Guerraoui, Rachid and Komatovic, Jovan and Vidigueira, Manuel and Zarbafian, Pouriya},
  title     = {Scalable Accountable {Byzantine} Agreement and Beyond},
  booktitle = {S\&P},
  year      = {2026},
}

@misc{aptosdiembft,
  author       = {{The Diem Team}},
  title        = {{DiemBFT} v4: State Machine Replication in the {Diem} Blockchain},
  year         = {2021},
  howpublished = {Technical Report},
}

@article{asyncbft2025,
  author  = {Cheng, Hao and Lu, Yuan and Lu, Zhenliang and Tang, Qiang and Zhang, Yuxuan and Zhang, Zhenfeng},
  title   = {{JUMBO}: Fully Asynchronous {BFT} Consensus Made Truly Scalable},
  journal = {TDSC},
  volume  = {22},
  number  = {5},
  pages   = {5584--5602},
  year    = {2025},
}

@inproceedings{beliefboxes,
  author    = {Bilgin, Onur and Sami, Abdullah As and Vujjini, Sriram Sai and Licato, John},
  title     = {The Effect of Belief Boxes and Open-Mindedness on Persuasion},
  booktitle = {ICAART},
  year      = {2026},
  pages     = {260--267},
}

@article{consensusdetection,
  author  = {Iwanowski, Marcin and Gahbler, Marcin},
  title   = {Multiple Large {AI} Models' Consensus for Object Detection---{A} Survey},
  journal = {Applied Sciences},
  volume  = {15},
  number  = {24},
  pages   = {12961},
  year    = {2025},
}

@article{coordllm,
  author  = {Saadaoui, Sami and Alonso, Eduardo},
  title   = {Coordinated {LLM} Multi-Agent Systems for Collaborative Question-Answer Generation},
  journal = {Knowledge-Based Systems},
  volume  = {330},
  pages   = {114627},
  year    = {2025},
}

@inproceedings{cpwbft,
  author       = {Zheng, Lifan and Chen, Jiawei and Yin, Qinghong and Zhang, Jingyuan and Zeng, Xinyi and Tian, Yu},
  title        = {Rethinking the Reliability of Multi-agent System: A Perspective from {Byzantine} Fault Tolerance},
  booktitle= {AAAI},
  year={2026},
  pages={35012--35020},
}

@misc{evibound,
  author       = {Chen, Ruiying},
  title        = {Evidence-Bound Autonomous Research ({EviBound}): A Governance Framework for Eliminating False Claims},
  year         = {2025},
  eprint       = {2511.05524},
  archivePrefix = {arXiv},
}

@article{groupbft2025,
  author  = {Liu, Xuyang and Zhang, Zijian and Li, Zhen and Lu, Xin and Li, Meng and Xu, Lei and Ao, Meng and Zhu, Liehuang},
  title   = {Group {BFT}: Two-Round {BFT} Protocols Via Replica Grouping},
  journal = {TDSC},
  volume  = {22},
  number  = {6},
  pages   = {6309--6326},
  year    = {2025},
}

@inproceedings{ipotane2026,
  author    = {Dai, Xiaohai and Ding, Chaozheng and Jin, Hai and Loss, Julian and Ren, Ling},
  title     = {{Ipotane}: Balancing the Good and Bad Cases of Asynchronous {BFT}},
  booktitle = {NDSS},
  year      = {2026},
}

@inproceedings{narwhal,
  author    = {Danezis, George and Kokoris-Kogias, Lefteris and Sonnino, Alberto and Spiegelman, Alexander},
  title     = {Narwhal and Tusk: A {DAG}-Based Mempool and Efficient {BFT} Consensus},
  booktitle = {EuroSys},
  year      = {2022},
  pages     = {34--50},
}

@misc{proofcarrying,
  author       = {Meyman, Edward},
  title        = {Verifiable {AI} Governance: The Four Tests Standard ({4TS}) and Proof-Carrying Decisions},
  year         = {2025},
  howpublished = {SSRN Preprint},
}

@article{proverag,
  author  = {Fayyazi, Reza and Trueba, Stella Hoyos and Zuzak, Michael and Yang, Shanchieh Jay},
  title   = {{ProveRAG}: Provenance-Driven Vulnerability Analysis with Automated Retrieval-Augmented {LLMs}},
  journal = {IEEE Access},
  volume  = {13},
  pages   = {212815--212826},
  year    = {2025},
  doi     = {10.1109/ACCESS.2025.3638251},
}

@inproceedings{pinocchio,
  author    = {Parno, Bryan and Howell, Jon and Gentry, Craig and Raykova, Mariana},
  title     = {Pinocchio: Nearly Practical Verifiable Computation},
  booktitle = {S\&P},
  year      = {2013},
  pages     = {238--252},
}

@inproceedings{stark,
  author    = {Ben-Sasson, Eli and Bentov, Iddo and Horesh, Yinon and Riabzev, Michael},
  title     = {Scalable Zero Knowledge with No Trusted Setup},
  booktitle = {CRYPTO},
  year      = {2019},
  pages     = {701--732},
}

@inproceedings{zkml,
  author    = {Chen, Bing-Jyue and Waiwitlikhit, Suppakit and Stoica, Ion and Kang, Daniel},
  title     = {{ZKML}: An Optimizing System for {ML} Inference in Zero-Knowledge Proofs},
  booktitle = {EuroSys},
  year      = {2024},
  pages     = {560--574},
}

@misc{sgx,
  author       = {Costan, Victor and Devadas, Srinivas},
  title        = {{Intel SGX} Explained},
  howpublished = {Cryptology ePrint Archive, Report 2016/086},
  year         = {2016},
}

@inproceedings{slalom,
  author    = {Tram{\`e}r, Florian and Boneh, Dan},
  title     = {Slalom: Fast, Verifiable and Private Execution of Neural Networks in Trusted Hardware},
  booktitle = {ICLR},
  year      = {2019},
}

@techreport{chainlink2,
  author = {Breidenbach, Lorenz and Cachin, Christian and Chan, Benedict and Coventry, Alex and Ellis, Steve and Juels, Ari and Koushanfar, Farinaz and Miller, Andrew and Magauran, Brendan and Moroz, Daniel and Nazarov, Sergey and Topliceanu, Alexandru and Tram{\`e}r, Florian and Zhang, Fan},
  title  = {Chainlink 2.0: Next Steps in the Evolution of Decentralized Oracle Networks},
  year   = {2021},
  type   = {Whitepaper},
}

@techreport{augur,
  author       = {Peterson, Jack and Krug, Joseph and Zoltu, Micah and Williams, Austin K. and Alexander, Stephanie},
  title        = {Augur: a Decentralized Oracle and Prediction Market Platform (v2.0)},
  institution  = {Forecast Foundation},
  year   = {2019},
  type   = {Whitepaper},
}

@article{oraclesurvey,
  author  = {Pasdar, Amirmohammad and Lee, Young Choon and Dong, Zhongli},
  title   = {Connect {API} with Blockchain: A Survey on Blockchain Oracle Implementation},
  journal = {ACM Computing Surveys},
  volume  = {55},
  number  = {10},
  pages   = {1--39},
  year    = {2023},
}

@inproceedings{arbitrum,
  author    = {Kalodner, Harry and Goldfeder, Steven and Chen, Xiaoqi and Weinberg, S. Matthew and Felten, Edward W.},
  title     = {Arbitrum: Scalable, Private Smart Contracts},
  booktitle = {USENIX Security},
  year      = {2018},
  pages     = {1353--1370},
}

@article{rollupsurvey,
  author  = {Thibault, Louis Tremblay and Sarry, Tom and Hafid, Abdelhakim Senhaji},
  title   = {Blockchain Scaling Using Rollups: A Comprehensive Survey},
  journal = {IEEE Access},
  volume  = {10},
  pages   = {93039--93054},
  year    = {2022},
}

@inproceedings{wadden2020scifact,
  author    = {Wadden, David and Lin, Shanchuan and Lo, Kyle and Wang, Lucy Lu and van Zuylen, Madeleine and Cohan, Arman and Hajishirzi, Hannaneh},
  title     = {Fact or Fiction: Verifying Scientific Claims},
  booktitle = {EMNLP},
  year      = {2020},
  pages     = {7534--7550},
}

@inproceedings{thorne2018fever,
  author    = {Thorne, James and Vlachos, Andreas and Christodoulopoulos, Christos and Mittal, Arpit},
  title     = {{FEVER}: a Large-scale Dataset for Fact Extraction and {VERification}},
  booktitle = {NAACL-HLT},
  year      = {2018},
  pages     = {809--819},
}

@inproceedings{llmjudge,
  author    = {Zheng, Lianmin and Chiang, Wei-Lin and Sheng, Ying and Zhuang, Siyuan and Wu, Zhanghao and Zhuang, Yonghao and Lin, Zi and Li, Zhuohan and Li, Dacheng and Xing, Eric P. and Zhang, Hao and Gonzalez, Joseph E. and Stoica, Ion},
  title     = {Judging {LLM}-as-a-Judge with {MT-Bench} and {Chatbot Arena}},
  booktitle = {Advances in Neural Information Processing Systems 36 (NeurIPS)},
  year      = {2023},
  pages     = {46595--46623},
  doi       = {10.52202/075280-2020},
}

@article{factualitychallenges,
  author  = {Augenstein, Isabelle and Baldwin, Timothy and Cha, Meeyoung and Chakraborty, Tanmoy and Ciampaglia, Giovanni Luca and Corney, David and DiResta, Renee and Ferrara, Emilio and Hale, Scott and Halevy, Alon and Hovy, Eduard and Ji, Heng and Menczer, Filippo and Miguez, Ruben and Nakov, Preslav and Scheufele, Dietram and Sharma, Shivam and Zagni, Giovanni},
  title   = {Factuality Challenges in the Era of Large Language Models and Opportunities for Fact-Checking},
  journal = {Nature Machine Intelligence},
  volume  = {6},
  number  = {8},
  pages   = {852--863},
  year    = {2024},
  doi     = {10.1038/s42256-024-00881-z},
}

@article{misinfollm,
  author  = {Chen, Canyu and Shu, Kai},
  title   = {Combating Misinformation in the Age of {LLMs}: Opportunities and Challenges},
  journal = {AI Magazine},
  volume  = {45},
  number  = {3},
  pages   = {354--368},
  year    = {2024},
  doi     = {10.1002/aaai.12188},
}

\appendix
\setlength{\emergencystretch}{3em}
\setlength{\floatsep}{6pt plus 2pt minus 2pt}
\setlength{\intextsep}{4pt plus 2pt minus 2pt}
% =================================================================
% Appendix A: Ethical Considerations, Open Science, and Use of Generative AI
% =================================================================
\section{Ethical Considerations}\label{app:ethics}

\looseness=-1 The study uses public benchmarks and synthetic tasks, involves no human participants or personal data, and runs attacks only in simulation. Code for all experiments will be released after review.
The authors used LLM assistants (Anthropic's Claude, OpenAI's Codex) to draft and edit text, to write and review code, and to review the protocol specification and reported numbers. The authors checked all such output and take full responsibility for this paper.

% =================================================================
% Appendix A: Formalization of Probabilistic Semantic Validity
% =================================================================
\section{Formalization of Probabilistic Semantic Validity}\label{app:formalization}

\subsection{Ground-Truth and Failure Event Definitions}\label{app:form:defs}

$\mathcal{V}(D) \in \{0, 1\}$ denotes policy-relative semantic validity on $\mathcal{D}$.
\begin{equation*}
\mathcal{V}(D) = 1 \iff \text{claim is supported by evidence under policy.}
\end{equation*}

\astral's \emph{stable finality} requires $\vstruct(D) = \mathsf{PASS}$, L1's $\commitqc$, $k$ \texttt{SEM\_PASS} attestations ($\semqc$), and challenge-window survival ($\ncc$) for that slot.

\begin{definition}[Stable Semantic Failure Event]\label{def:failure}
\begin{equation*}
\mathcal{E} := \{\mathsf{Stable}(D) \wedge \mathcal{V}(D) = 0\}
\end{equation*}
The empirical failure bound is
\begin{equation*}
\delta(\epsilon, k_h) := \mathbb{E}_{D \sim \mathcal{D}}\left[\Pr[\mathcal{E} \mid D]\right]
\end{equation*}
\end{definition}

\subsection{Decision Objects and Semantic Validity}\label{app:form:decision}

\begin{definition}[Semantic Validity]\label{def:semvalid}
Semantic validity uses decision $D$, evidence $E$, and inference policy $\pi$.
\begin{packeditemize}
  \item \textbf{Inference policy}. $\pi : 2^{\mathit{Evidence}} \times \mathit{Claim} \to [0, 1]$ measurably assigns confidence to evidence--claim pairs.
  Domain-specific $\pi$ is committed in $D$ via $\mathsf{policy\_hash}$.
  \item \textbf{Binary validity}. $\mathsf{SemValid}_\tau(D, E) \equiv \pi(E, \mathsf{claim}(D)) \geq \tau$, with acceptance threshold $\tau \in (0, 1]$.
  \item \textbf{Protocol realization}. $\vinfer$ stochastically estimates $\pi$.
  For honest $p_i$ with $O_i$, $\vinfer$ at $p_i$ samples a distribution centered on $\pi(E, \mathsf{claim}(D))$ with oracle-dependent noise.
  Honest processes may disagree on $\mathsf{SemValid}_\tau(D, E)$ for $\pi(E, \mathsf{claim}(D))$ near $\tau$.
\end{packeditemize}
\end{definition}

\noindent\textbf{Required properties of $\pi$.} \looseness=-1 Each domain-specific \astral instantiation must verify them.
\begin{enumerate}[nosep]
  \item \emph{Measurability}. \looseness=-1 $\pi$ is well-defined on all pairs $(E, \mathsf{claim})$.
  \item \emph{Weak evidence monotonicity}. $\pi(E \cup E', c) \geq \pi(E, c)$ for corroborating evidence $E'$.
  \item \emph{Threshold sensitivity}. Inputs of non-trivial measure satisfy $\pi(E, c) \in (\tau - \beta, \tau + \beta)$ for small $\beta$. Without near-threshold inputs, trivial decidability makes \astral unnecessary.
\end{enumerate}

\astral assumes neither complete $\pi$, sound $\pi$ (false-accept probability $\epsilon$), nor oracle agreement (\Cref{sec:design:probability}).

% --- 4.3.2 Phase 1: Semantic Attestation ---

\subsection{Revised Minimal Explicit Assumptions}\label{app:form:assumptions}

\begin{assumption}[DLS Partial Synchrony]\label{asm:dls}
After unknown GST, all messages between honest replicas arrive within known bound $\Delta$.
View timeouts increase monotonically, so eventually $\mathsf{timeout}(v) > \Delta$ post-GST.
\looseness=-1 Agreement safety is unconditional. Liveness and semantic stabilization hold post-GST.
\end{assumption}

\begin{assumption}[Cryptographic Primitives]\label{asm:crypto}
Assume EUF-CMA signatures and collision-resistant hashes.
All protocol signatures bind $(\mathsf{pid}, \mathsf{inst}, h, v)$ with domain separation against cross-instance replay.
All quorum certificates contain $2f+1$ individual signatures.
\end{assumption}

\begin{assumption}[AFS Availability]\label{asm:afs}
The $\afs$ is trusted for availability.
Two-phase $\daqc$ preserves agreement safety (\Cref{sec:design:daqc}). If $\afs$ is unavailable, $\daqc$ cannot form and liveness stalls.
Each \texttt{DA\_ACK} signature binds the committed Merkle root $\mathsf{afs\_root}$.
Every honest replica independently verifies integrity before acknowledging availability.
Replicating $\afs$ across $2f{+}1$ replicas mitigates selective evidence withholding.
\looseness=-1 Erasure-coded data-availability sampling (DAS) complements replication and could reduce storage without weakening the trust model.
\end{assumption}

\begin{assumption}[Snapshot Determinism]\label{asm:snapshot}
\looseness=-1 All honest replicas evaluate $\vstruct$ on the same committed snapshot, certified by $\daqc$ before any attestation.
\end{assumption}

\begin{assumption}[Base Consensus under Partial Synchrony]\label{asm:a1}
The network satisfies \Cref{asm:dls}.
L1's black-box standard BFT views/epochs and quorum locking prevent conflicting commit/NCC quorums formed by honest replicas for the same slot.
\end{assumption}

\begin{assumption}[Distributional False Accept]\label{asm:a2}
We provide no worst-case bounds for adaptive adversaries crafting inputs to fool $\vinfer$.
For invalid claims from characterized task distribution $\mathcal{D}$, each semantic verifier's empirical false-accept probability is $\Pr[\vinfer = \mathsf{PASS}] \leq \epsilon$.
We measure $\epsilon = \hat{\epsilon}(\tau)$ at each $\tau$.
This risk metric provides no cryptographic soundness guarantee.
\end{assumption}

\begin{assumption}[Diversity with Bounded Correlation (Honest Subset)]\label{asm:a3}
Among $k = 2f+1$ attestations in $\semqc$, at least $k_h = f+1$ come from a fixed set $H$ of $n-f$ honest assessors.
For honest-verifier errors, $\rho_{k_h}$ (\Cref{eq:rho}) is at most 1 under independence, because each honest verifier's rate is at most $\epsilon$, and can exceed 1 under positive correlation.
It has no a priori bound and is measured for each $\mathcal{D}$.
Byzantine attestations are false accepts ($\Pr = 1$) within the $f$ fault budget.
\end{assumption}

\begin{assumption}[Counter-Card Propagation via Mandatory Inclusion]\label{asm:a4}
\astral disseminates counters through $\counterqc$ and Mandatory Inclusion (\Cref{sec:design:challenge}).
Post-GST, partial synchrony with $W(v) \geq 2\Delta$ and the admission cutoff at $W(v) - \Delta$ propagate every admitted counter before NCC signing.
Triage resolves structurally valid, semantically vacuous counters.
$\crqc$ unblocks NCC signing within $O(T_{\mathit{infer}} + \Delta)$. Fallback takes at most $T_{\mathit{arb}}$.
Because every filed counter is delivered post-GST, a challenge fails only if no honest checker detects the invalidity and files a structurally valid counter in time, or if triage dismisses the counter.
\end{assumption}

\begin{assumption}[Verifier Participation]\label{asm:a5}
Post-GST, all honest replicas complete $\vinfer$ within $T_{\mathit{infer}}$, counter triage within $T_{\mathit{arb}}$, and \texttt{DA\_ACK} within $T_{\mathit{DA}}$.
These bounds give $\daqc$ liveness, while $\semqc$ and $\crqc$ formation also depend on honest votes (Remark~\ref{rem:semqcliveness}).
\end{assumption}

\begin{assumption}[Pacemaker Bound]\label{asm:a6}
Post-GST, pacemaker timeouts eventually exceed $\Delta + T_{\mathit{DA}} + T_{\mathit{infer}} + W(v) + T_{\mathit{arb}}$, ensuring stabilization within one stable view.
Alternatively, stabilization state persists across views, preserving all partial progress.
\end{assumption}

\subsection{Stabilization Policy and Proof Sketch}\label{app:form:proof}

\noindent\textbf{Stabilization Policy $\mathcal{P}$.}
\looseness=-1 $D$ becomes $\mathsf{STABLE}$ iff $k$ distinct-replica \texttt{SEM\_PASS} attestations are collected and a $2f+1$ quorum $\ncc$ forms.

\begin{proposition}[Distributional Semantic Failure Decomposition]\label{prop:validity}
Under \Cref{asm:a1,asm:a2,asm:a3,asm:a4,asm:a5,asm:a6}, let $Q$ be the event that at least $k_h$ assessors in $H$ falsely accept. The distributional semantic failure probability satisfies
\begin{equation*}
\delta(\epsilon, k_h) \leq \underbrace{\tbinom{n-f}{k_h}\,\epsilon^{k_h} \rho_{k_h}}_{\geq\, \Pr[Q]} \cdot \underbrace{(1 - p_{\mathit{ch} \mid Q})}_{\text{challenge miss}}.
\end{equation*}
\end{proposition}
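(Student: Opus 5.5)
The plan is to fix an invalid decision $D$ (so $\mathcal{V}(D)=0$), bound $\Pr[\mathcal{E}\mid D]$ pointwise, and then average over $D\sim\mathcal{D}$ to obtain $\delta(\epsilon,k_h)$ as in Definition~\ref{def:failure}. Valid $D$ contribute nothing to $\mathcal{E}$, so only invalid inputs matter.

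\textbf{Step 1: stability forces the event $Q$.} Under the stabilization policy $\mathcal{P}$, $\mathsf{Stable}(D)$ requires a $\semqc$ of $k=2f+1$ distinct-replica \texttt{SEM\_PASS} attestations. By Assumption~\ref{asm:crypto}, Byzantine assessors cannot forge honest signatures. By Lemma~\ref{lem:honestsubset} (equivalently Assumption~\ref{asm:a3}), at least $k_h=f+1$ of the attesters therefore lie in the fixed honest set $H$. Under the Deterministic Wrapper (Invariant~\ref{inv:wrapper}), an honest \texttt{SEM\_PASS} on an invalid $D$ is exactly a false accept $A_i$. Hence $\mathcal{E}\subseteq Q\cap\{\ncc \text{ forms}\}$.

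\textbf{Step 2: bound $\Pr[Q]$.} Write $Q=\bigcup_{S\subseteq H,|S|=k_h}\bigcap_{i\in S}A_i$. The union bound over the $\binom{n-f}{k_h}$ subsets, combined with the definition of $\rho_{k_h}$ in \Cref{eq:rho}, gives $\Pr[Q]\le\binom{n-f}{k_h}\epsilon^{k_h}\rho_{k_h}$. This uses Assumption~\ref{asm:a2} to make $\epsilon$ the per-verifier rate under $\mathcal{D}$. Because the definition of $\rho$ takes a maximum over subsets, the bound holds after averaging over $D$ as well. I will define $\rho$ on the distribution-averaged probabilities, so that the final inequality is stated at the level of $\delta$ rather than pointwise.

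\textbf{Step 3: factor in the challenge.} Write $\Pr[\mathcal{E}]\le\Pr[Q]\cdot\Pr[\ncc\text{ forms}\mid Q]$. By Assumption~\ref{asm:a4}, post-GST every filed counter is delivered before \texttt{NO\_CHALLENGE} signing. By Definition~\ref{def:counterlifecycle}, a structurally valid counter blocks $\ncc$ unless a $\crqc$ of $2f+1$ \texttt{COUNTER\_REJECT} votes dismisses it. A sustained fraud proof blocks it permanently. Consequently, $\ncc$ can form on $Q$ only if no upheld counter exists, so $\Pr[\ncc\mid Q]\le 1-p_{\mathit{ch}\mid Q}$. Assumptions~\ref{asm:a1}, \ref{asm:a5}, and \ref{asm:a6} ensure that conflicting $\ncc$ quorums cannot form and that the window and triage actually complete, so $\ncc$ cannot form by some path that bypasses the counter.

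\textbf{Main obstacle.} The hardest step is Step 3: it needs a careful definition of $p_{\mathit{ch}\mid Q}$ as the conditional probability, given $Q$ and averaged over $\mathcal{D}$, that a counter is filed in time, delivered, and upheld. It must also hold despite Byzantine \texttt{COUNTER\_REJECT} votes; $2f+1$ rejects need $f+1$ honest ones, so I fold honest-triage error into the event ``not upheld''. I would define $p_{\mathit{ch}\mid Q}$ directly as this conditional quantity, so that the product bound is an identity-plus-inequality rather than an independence claim. I would then remark that replacing it by an unconditional estimate requires near-independence.
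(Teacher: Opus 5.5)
Your proposal is correct and matches the paper's proof. $\mathcal{E}$ forces $\semqc$ and hence $Q$ via the honest-subset argument, the union bound with \Cref{eq:rho} gives $\Pr[Q]\leq\binom{n-f}{k_h}\epsilon^{k_h}\rho_{k_h}$, and taking $p_{\mathit{ch}\mid Q}$ as the genuine conditional probability of a filed, delivered, and upheld counter gives $\Pr[\mathcal{E}]\leq\Pr[Q]\,(1-p_{\mathit{ch}\mid Q})$. Your opening plan to bound $\Pr[\mathcal{E}\mid D]$ pointwise and then average is unnecessary, and it would not by itself yield the product of averaged factors because $\epsilon$ is only a distributional bound. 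You rightly settle on distribution-averaged probabilities throughout, which is what the paper does.
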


\begin{proof}
Failure ($\mathcal{E}$) requires $\semqc$, and hence $Q$ by \Cref{asm:a3}, together with no counter that is filed, delivered, and upheld in triage before $\ncc$ forms.
Thus $\Pr[\mathcal{E}] \leq \Pr[Q]\,(1 - p_{\mathit{ch} \mid Q})$, and $\Pr[Q] = 0$ gives $\Pr[\mathcal{E}] = 0$.
Since $Q$ is the union of $\bigcap_{i \in S} A_i$ over all $S \subseteq H$ with $|S| = k_h$, the union bound and \Cref{eq:rho} give $\Pr[Q] \leq \binom{n-f}{k_h}\epsilon^{k_h}\rho_{k_h}$.
Invalidity that fools $k_h$ verifiers may also evade detection, so unconditional estimates can overstate $p_{\mathit{ch} \mid Q}$.
\end{proof}

% =================================================================
% Appendix B: Agreement Proof Sketches and Censorship Resistance
% =================================================================
\section{Agreement Proof and Censorship Resistance}\label{app:censorship}

% \noindent\textbf{Proof Index.}
% \begin{itemize}[nosep]
%   \item Proposition~\ref{prop:impossibility} (Coverage-Error Tradeoff): \S\ref{app:proof:tradeoff}
%   \item Lemma~\ref{lem:l1safety} (L1 Safety): \S\ref{app:censor:safety}
%   \item Theorem~\ref{thm:stableagreement_app} (Stable Agreement): \S\ref{app:censor:safety}
%   \item Theorem~\ref{thm:astralliveness} (Liveness): \S\ref{app:censor:liveness}
%   \item Theorem~\ref{thm:dissemination_app} (Counter Dissemination): \S\ref{app:censor:mi}
% \end{itemize}

\subsection{False Rejection under Shared Support}\label{app:proof:tradeoff}

\begin{definition}[Semantic Validity Decision (SVD) Task]\label{def:svd}
Each process $p_i$ receives decision object $D = (\mathsf{claim}, \mathsf{evidence})$ and
must output binary judgment $v_i \in \{\mathsf{VALID}, \mathsf{INVALID}\}$.
\begin{itemize}[nosep]
  \item \emph{Agreement.} For all honest processes $p_i, p_j$, $v_i = v_j$.
  \item \emph{Rejection coverage.} The set $F_{\mathit{inv}}$ of non-trivial invalid inputs that the protocol rejects with positive probability is non-empty. Rejection coverage differs from the empirical coverage of \S\ref{sec:eval:setup}, which is the fraction of trials that end in a binary verdict.
  \item \emph{Termination.} All honest processes output with probability 1.
\end{itemize}
\end{definition}

% Full proposition statement is in the main body (\Cref{prop:impossibility}).
% Here we provide the formal SVD task definition, extended discussion, and proof.

\noindent\textbf{Majority example.}
For $n = 7$ assessors that each give the wrong label on a valid input independently with probability $e$, majority false rejection has probability $\sum_{i \geq 4} \binom{7}{i} e^i (1-e)^{7-i}$, which is $2.7 \times 10^{-3}$ for $e = 0.1$ and $3.3 \times 10^{-2}$ for $e = 0.2$.
A disagreement rate does not determine $e$, because two independent calls disagree with probability $2e(1-e)$ for $e$ and for $1-e$ alike.

\smallskip
\noindent\textbf{Proof of Proposition~\ref{prop:impossibility}.}

\begin{proof}
Let $r$ be the coins of $\Pi$, drawn independently of the input and of the oracles.
Given $r$, the queries, stopping, and verdict of $\Pi$ are functions of the outputs observed so far. The set $H_{\mathit{rej}}(r)$ of finite output histories after which $\Pi$ stops with $\mathsf{INVALID}$ therefore does not depend on $D$.
A history $h = (o_1, \ldots, o_t)$ has probability $\Pr_D[h \mid r] = \prod_{j \leq t} \Pr_D[o_j \mid r, o_1, \ldots, o_{j-1}]$ on input $D$, and shared support, which holds for every coin value $r$, makes every factor positive for every reachable $h$ and non-trivial $D$.
Since $\Pi$ terminates with probability 1, rejection is the disjoint union of the events of following some $h \in H_{\mathit{rej}}(r)$, which gives $\Pr[\Pi(D) = \mathsf{INVALID}] = \mathbb{E}_r[\sum_{h \in H_{\mathit{rej}}(r)} \Pr_D[h \mid r]]$ for every non-trivial $D$.
For $D^* \in F_{\mathit{inv}}$ this expectation is positive, hence $H_{\mathit{rej}}(r)$ is non-empty for coins of positive probability.
For a non-trivial valid $D'$ the inner sum is positive on those coins, which yields $P_{\mathit{err}}(\Pi, D') > 0$.

\emph{Examples.}
One round of majority voting over independent calls with valid-input error $e > 0$ falsely rejects with the positive binomial tail above.
Unanimous $\mathsf{INVALID}$ lowers false rejection to $e^n$, but a call that rejects an invalid input with probability $q$ then yields rejection only with probability $q^n$.
Both rules have the same rejection coverage and differ only in their rejection probabilities.
A protocol that always outputs $\mathsf{VALID}$ has $P_{\mathit{err}} = 0$ and empty $F_{\mathit{inv}}$.
\looseness=-1 A deterministic $\vstruct$ reads $D$ directly and lies outside the hypothesis. It agrees without false rejection, but it checks only that evidence exists and is well-formed and cannot test whether the evidence \emph{entails} the claim.
\end{proof}

\subsection{Stabilization-Layer Liveness (Post-GST, Conditional)}\label{app:censor:liveness}

\noindent\textbf{SemQC Liveness.}
Under static corruptions with $n = 3f+1$ and $f$ Byzantine replicas, $2f+1$ honest replicas run $\vinfer$.
Let $\epsilon_{\mathit{fn}}$ denote honest-verifier \textsf{REJECT} probability on valid claims (false-negative rate).
$\semqc$ requires $k = 2f+1$ distinct-replica attestations, and $f_{\mathit{attest}} \leq f$ counts Byzantine attestations that arrive in time.
If all $f$ Byzantine replicas withhold, every honest replica must pass, which is guaranteed only when $\epsilon_{\mathit{fn}} = 0$.
If all $f$ Byzantine replicas attest in time, $f+1$ honest passes suffice, and the expected honest pass count $(1 - \epsilon_{\mathit{fn}})(2f+1)$ reaches $f+1$ when
\begin{equation}\label{eq:fn_constraint}
  \epsilon_{\mathit{fn}} \leq \frac{f}{2f+1} < \frac{1}{2}
\end{equation}
This bounds only the expected count. For independent honest errors and $f_{\mathit{attest}}$ fixed independently of them, $\semqc$ forms with probability $\Pr[X \geq 2f+1-f_{\mathit{attest}}]$ for $X \sim \mathrm{Bin}(2f+1, 1-\epsilon_{\mathit{fn}})$.
By \Cref{asm:a5}, honest attestations arrive within $O(T_{\mathit{infer}} + \Delta)$.
\looseness=-1 \textbf{Fallback.} Lowering $k$ to $f+1$ weakens certification to $k_h = 1$ and higher $\delta$.

\begin{remark}[SemQC Liveness is Conditional]\label{rem:semqcliveness}
Under withholding ($f_{\mathit{attest}} = 0$), $\semqc$ at $k = 2f+1$ forms with certainty only when $\epsilon_{\mathit{fn}} = 0$.
With all Byzantine replicas attesting in time ($f_{\mathit{attest}} = f$), $\epsilon_{\mathit{fn}} \leq f/(2f+1)$ (\Cref{eq:fn_constraint}), or $2/5$ at $n = 7$, brings the expected count to $k$, and formation remains probabilistic.
Choose $k$ by threat model, between $k = 2f+1$ (strong, conditional on $\epsilon_{\mathit{fn}}$) and $k = f+1$ (minimal, tolerates up to $f$ honest rejections, but $k_h = 1$).
\end{remark}

\smallskip
\noindent\textbf{NCC Liveness.}
\looseness=-1 Without challenges, after $W(v) \geq 2\Delta$ all honest replicas sign \texttt{NO\_CHALLENGE} $\to$ $\ncc$ within $O(W(v) + \Delta)$.
For challenged claims, triage resolves every counter or flags \texttt{DISPUTED} within worst-case $O(f \cdot T_{\mathit{arb}})$.

\smallskip
\noindent\textbf{StableQC Liveness.}
Post-GST, valid unchallenged claims obtain $\stableqc$ within $O(2\Delta + W(v) + T_{\mathit{infer}} + T_{\mathit{DA}})$ after proposal once $\semqc$ forms within its attestation bound.
Challenges add triage time $O(f \cdot T_{\mathit{arb}})$.
Unresolved counters trigger \texttt{DISPUTED}.

\begin{theorem}[\astral Liveness (Post-GST)]\label{thm:astralliveness}
Under \Cref{asm:a1,asm:a2,asm:a3,asm:a4,asm:a5,asm:a6}, the following hold post-GST for any height $h$.
\begin{enumerate}[nosep]
  \item \textbf{Commit liveness.} $\mathsf{Commit}(h)$ is emitted within $O(f)$ view-changes $+$ $O(\Delta + T_{\mathit{DA}})$ per view (inherited from L1).
  \item \textbf{Stabilization liveness.} If at least $2f+1$ \texttt{SEM\_PASS} attestations arrive within $O(T_{\mathit{infer}} + \Delta)$ (Remark~\ref{rem:semqcliveness}) and no counter carries a valid deterministic fraud-proof, $\stableqc$ forms within $O(2\Delta + W(v) + T_{\mathit{infer}} + T_{\mathit{DA}})$ after $\commitqc$.
  For challenged claims, counter triage adds $O(f \cdot T_{\mathit{arb}})$.
  \looseness=-1 Counters unresolved after triage flag the decision \texttt{DISPUTED} within $O(W(v) + T_{\mathit{arb}})$.
\end{enumerate}
\end{theorem}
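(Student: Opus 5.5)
We prove the two parts of Theorem~\ref{thm:astralliveness} separately, and throughout assume post-GST operation under \Cref{asm:a1,asm:a2,asm:a3,asm:a4,asm:a5,asm:a6}.

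\textbf{Commit liveness.} We treat L1 as a black box. By \Cref{asm:a1}, L1 is a standard view-based BFT protocol under DLS partial synchrony (\Cref{asm:dls}), so after GST it rotates through at most $f$ faulty leaders before reaching an honest one, giving $O(f)$ view-changes. The only addition from \astral is the external-validity hook. Because $\vstruct$ is deterministic and purely restrictive, every honest replica evaluates it identically on an honest leader's proposal. $\daqc$ formation then needs $2f{+}1$ \texttt{DA\_ACK}s, which by \Cref{asm:afs,asm:a5} arrive within $T_{\mathit{DA}}+\Delta$ from the $2f{+}1$ honest replicas. Finally, \Cref{asm:a6} makes view timeouts eventually exceed this per-view cost, so an honest view completes and the bound is $O(f)$ views at $O(\Delta+T_{\mathit{DA}})$ each.

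\textbf{Stabilization liveness.} We trace the phases after $\commitqc$ in order.
\begin{enumerate}[nosep]
  \item \textbf{Pinning.} L2 yields $\daqc$ within $T_{\mathit{DA}}+\Delta$.
  \item \textbf{Attestation.} By hypothesis, $2f{+}1$ \texttt{SEM\_PASS} messages arrive within $O(T_{\mathit{infer}}+\Delta)$. Any honest collector aggregates them into $\semqc$ and forwards it, so every honest replica holds $\semqc$ after one more $\Delta$.
  \item \textbf{Challenge window.} Concurrently, the window of length $W(v)\ge 2\Delta$ runs. With no counter, all $2f{+}1$ honest replicas sign \texttt{NO\_CHALLENGE} at close, so $\ncc$ forms within $\Delta$.
\end{enumerate}
Summing these stages gives $O(2\Delta+W(v)+T_{\mathit{infer}}+T_{\mathit{DA}})$. $\stableqc$ then follows by conjoining the three certificates on the same $H(b)$, which Theorem~\ref{thm:stableagreement} guarantees.

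\textbf{Challenged claims.} We use Theorem~\ref{thm:dissemination} and \Cref{asm:a4} to show that all honest replicas hold the same admitted counter set before the window closes. Each counter then reaches a terminal state (Definition~\ref{def:counterlifecycle}) within $T_{\mathit{arb}}$: either $\crqc$ forms, or the counter is sustained, or timeout flags \texttt{DISPUTED}. Each Byzantine replica files at most one structurally valid counter per window, and triage proceeds sequentially in the worst case, so the total delay is at most $f\cdot T_{\mathit{arb}}$.

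When $\crqc$ forms for every counter, $\mathsf{unresolvedCounters}$ empties, honest replicas sign \texttt{NO\_CHALLENGE}, and $\ncc$ forms. Otherwise the timeout flags \texttt{DISPUTED} within $O(W(v)+T_{\mathit{arb}})$. The exception is a valid deterministic fraud proof, which is excluded by hypothesis.

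\textbf{Main obstacle.} The hard step is showing that the view change does not reset L3 progress, or equivalently that the whole stabilization fits inside one stable view. We would invoke the second clause of \Cref{asm:a6}, that stabilization state persists across views. We would also check, using \Cref{inv:isolation}, that L3 messages never block L1 voting, so no circular wait arises between the two layers.
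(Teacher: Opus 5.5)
Your decomposition matches the paper's proof. Both inherit commit liveness from L1 because \astral only restricts, account for the L3 phases one by one, and use \Cref{asm:a6} to keep stabilization within one stable view. Your version is more detailed, but the $\daqc$ step leaves out a condition that the paper's proof addresses explicitly. You claim that all $2f{+}1$ honest replicas acknowledge an honest leader's proposal, citing \Cref{asm:afs,asm:a5}. However, Mandatory Inclusion makes an honest replica deny \texttt{DA\_ACK} to any proposal that omits a $\counterqc$ it knows for an unresolved counter. Take $n = 3f{+}1$, and suppose one honest replica holds such a $\counterqc$ that the honest leader has not yet received. Then only $2f$ honest acknowledgments remain, and the $f$ Byzantine replicas can withhold theirs. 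So $\daqc$ never forms, and L3 never starts. Evidence availability and bounded processing time do not exclude this case. The missing step is the paper's middle sentence: deterministic $\counterqc$ forwarding makes honest counter sets converge within $\Delta$ post-GST (\Cref{thm:dissemination}), so an honest leader can satisfy Mandatory Inclusion. Leader-side censorship then costs at most $O(f)$ extra view-changes, which your $O(f)$ count absorbs. You cite \Cref{thm:dissemination} only for the admitted counter set at window close, and you need it here as well.

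Two smaller corrections. First, the paper runs $\vstruct$, $\daqc$, and Mandatory Inclusion after $\commitqc$ and adds no evidence-dependent condition to $P_{\mathit{ext}}$, so L1 voting never waits on evidence. Folding $\vstruct$ into the external-validity hook therefore only over-counts and does not hurt your bound. Second, $\commitqc$, $\semqc$, and $\ncc$ reference the same $H(b)$ because of a protocol rule: every certificate binds the committed block hash, and \Cref{inv:inputconsistency} gives all honest replicas identical L3 input. That rule is the reason Stable Agreement holds, so you cannot derive it from Stable Agreement.
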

\begin{proof}
L1 gives commit liveness (\textbf{Appendix~\ref{app:hotstuff:liveness}}), since \astral only restricts voting (Remark~\ref{rem:extvalidity:app}) and an honest leader can satisfy all conditions post-GST with AFS availability (\Cref{asm:afs}). Deterministic counter forwarding converges within $\Delta$, satisfying Mandatory Inclusion.
$\semqc$ formation under this attestation condition and NCC liveness with resolved counters or \texttt{DISPUTED} timeout yield stabilization within one stable view (\Cref{asm:a6}).
\end{proof}

\section{L1 Interface}\label{app:hotstuff}

\astral requires totally ordered $\mathsf{Commit}(h)$, at most one per height, and post-GST commit liveness (\S\ref{sec:design:l1}).
Before voting, L1 must evaluate external validity $P_{\mathit{ext}}(b) \to \{\mathsf{accept}, \mathsf{reject}\}$.
The L2 admission checks $\vstruct$, $\daqc$ verification, and Mandatory Inclusion run after $\commitqc$ (\Cref{alg:consensus}), so \astral adds no evidence-dependent condition to $P_{\mathit{ext}}$ and L1 votes never wait on evidence.
With deterministic BFT (HotStuff~\cite{hotstuff}, PBFT~\cite{pbft}, or Tendermint~\cite{tendermint}), L2 starts at $\commitqc$.
\looseness=-1 On probabilistic chains, L2 starts at $k$-deep confirmation. Semantic failure or reversion has probability at most $\delta + p_{\mathit{rev}}(k)$ (\S\ref{sec:eval:rq3}).

\begin{remark}[External Validity Preserves Safety]\label{rem:extvalidity:app}
\looseness=-1 Quorum intersection and honest non-equivocation preserve L1 safety when $P_{\mathit{ext}}$ restricts accepted proposals.
\end{remark}

\smallskip
\noindent\textbf{Liveness.}\label{app:hotstuff:liveness}
Post-GST, an honest leader can satisfy all \astral conditions given available evidence (\Cref{asm:afs}) and deterministic $\counterqc$ forwarding for Mandatory Inclusion.
Post-GST, once an unresolved $\counterqc$ reaches all honest assessors (\Cref{thm:dissemination}), a block omitting it obtains no $\daqc$ until the counter is included or resolved.
\looseness=-1 Stabilization requires bounded-time $\vinfer$ (\Cref{asm:a5}) and every valid counter's resolution within $T_{\mathit{arb}}$ (\S\ref{sec:design:challenge}).

% =================================================================
% Appendix D: Complete Protocol Specification
% =================================================================
\section{Complete Protocol Specification}\label{app:protocol}

\smallskip
\noindent\textbf{Window cap.}
Operators cap $W(v)$ at $W_{\max}$ to bound worst-case stabilization latency, \eg 60\,s interactively and 300\,s for batch pipelines.

\subsection{Formal Pipeline (Pseudocode)}\label{app:protocol:pseudocode}

\Cref{alg:consensus} specifies L1/L2 ($\daqc$), and \Cref{alg:stabilization} specifies L3 attestation, triage, and certification.

\begin{algorithm}[!htbp]
\caption{\astral Consensus Core (Replica $i$)}\label{alg:consensus}
\scriptsize
\begin{algorithmic}[1]
\STATE \textbf{Params:} $n, f, \mathsf{quorum}{=}2f{+}1, k{=}\mathsf{quorum}, \tau$; $W(v) = \max(2\Delta{+}\epsilon_{\mathit{proc}},\, W_0 {\cdot} v)$
\STATE \textbf{Leader Propose:} Build $b{=}(P, \mathsf{counter\_batch}, \mathsf{payload})$, broadcast $\texttt{PROPOSE}(\ds, b, \mathsf{highQC})$
\STATE \textbf{Replica Vote:} if valid and unvoted: log; $\mathsf{voted}[h,v] \gets \mathsf{true}$; send $\texttt{VOTE}(\ds, H(b))$
\STATE \textbf{Commit:} Leader forms QC on $\geq\mathsf{quorum}$ votes; on 3-chain: $\commitqc$; output $\mathsf{Commit}(h)$
\STATE \textbf{L2 Admission:} On $\commitqc$: if $\vstruct(D(b)) = \mathsf{PASS}$: broadcast $P$; collect $2f{+}1$ $\texttt{DA\_ACK} \to \daqc$; open window; trigger \Cref{alg:stabilization}. Mandatory Inclusion: deny \texttt{DA\_ACK} if known $\counterqc$s omitted.
\STATE \textbf{View-Change:} If no progress within $\mathsf{timeout}(v)$: $v \gets v{+}1$
\end{algorithmic}
\end{algorithm}

\begin{algorithm}[!htbp]
\caption{\looseness=-1 \astral Stabilization \& Counter Triage (Replica $i$)}\label{alg:stabilization}\label{alg:triage}
\scriptsize
\begin{algorithmic}[1]
\STATE \textbf{Attestation:} $(\mathsf{label}, \mathsf{conf}, \mathsf{ref}) \gets \vinfer(D(b))$; if $\mathsf{SUPPORT}$ and $\mathsf{conf} \geq \tau$: send \texttt{SEM\_PASS}
\STATE \textbf{SemQC:} On $\geq k$ distinct \texttt{SEM\_PASS}: form and forward $\semqc$
\STATE \textbf{Counter Triage:} For each $\counterqc$ of a counter filed before $W(v) - \Delta$:
\STATE \quad Deterministic fraud-proof: re-execute in WASM ($\leq G_{\max}$); valid $\to$ SUSTAINED ($\stableqc$ blocked); invalid $\to$ \texttt{COUNTER\_REJECT}
\STATE \quad Semantic counter: $\vinfer$ triage; refutes $\to$ \texttt{COUNTER\_SUSTAIN}; else $\to$ \texttt{COUNTER\_REJECT}
\STATE \textbf{CRQC:} $\geq\mathsf{quorum}$ \texttt{COUNTER\_REJECT} $\to$ $\crqc$; remove from $\mathsf{unresolvedCounters}$
\STATE \textbf{Closure:} After $W(v)$: if $\mathsf{unresolvedCounters} = \emptyset$: send \texttt{NO\_CHALLENGE}; $\geq\mathsf{quorum}$ $\to$ $\ncc$
\STATE \textbf{Timeout:} If counters remain unresolved at $W(v) + T_{\mathit{arb}}$: provisional $\mathsf{disputed}[h] \gets \mathsf{true}$
\STATE \textbf{Certify:} If $\commitqc, \semqc, \ncc \neq \bot$, no SUSTAINED: $\mathsf{Certify}(h)$, clear $\mathsf{disputed}[h]$
\STATE \textbf{Reject:} After $W(v){+}\Delta$, if $\semqc = \bot$, $\mathsf{unresolvedCounters} = \emptyset$, $\neg\mathsf{disputed}[h]$, no SUSTAINED: provisional $\mathsf{Reject}(h)$ until a forwarded $\semqc$ arrives
\end{algorithmic}
\end{algorithm}

\begin{invariant}[Stabilization--L1 Isolation]\label{inv:isolation}
Stabilization state (L3) is \textbf{monotone} and \textbf{block-scoped}.
L3 state never enters L1 voting. It reaches L2 only through Mandatory Inclusion, which reads unresolved certified counters and can only withhold \texttt{DA\_ACK}.
Voting consults only $P_{\mathit{ext}}$, $\mathsf{voted}[h,v]$, and $\mathsf{lockedQC}$.
Thus no stabilization certificate can induce a vote violating $\mathsf{lockedQC}$.
\end{invariant}

\begin{invariant}[Input Consistency for L3]\label{inv:inputconsistency}
Before any L3 attestation, all honest replicas commit the same L1 block ($\commitqc$ ensures identical $\mathit{val}_h$), verify structural validity with identical deterministic $\vstruct$ results, and obtain $\daqc$ certifying the same evidence snapshot's availability.
All honest replicas thus have identical, verified input for probabilistic semantic reasoning, while deterministic L1/L2 safety is independent of non-deterministic L3.
\end{invariant}

\noindent\textbf{Attestation threshold and diversity.}\label{app:protocol:threshold}\label{app:protocol:diversity}
Evaluation uses $k = 2f{+}1$. Lowering $k$ (down to $f{+}1$) tolerates more false negatives but weakens $\delta$ bounds.
\looseness=-1 Diversity predicates are recommended (\S\ref{sec:design:probability}), but optional for $\semqc$.

% =================================================================
% Appendix E: Evaluation on Real Public Claim-Verification Benchmarks
% =================================================================
\section{Supplementary Evaluation Results}\label{app:Supplementary_Evaluation}

\subsection{Committee, Non-Web Domains, and Additional Metrics}\label{app:extra}
\looseness=-1 GPT-4o, Claude Haiku 4.5, and Gemini 3 Flash are accessed via API. Llama 3.1 70B, Mixtral 8$\times$7B, Qwen 2.5 32B, and DeepSeek-R1-Distill 32B run locally, 4-bit quantized.
The $n{=}7$ committee uses two replicas per API family and one Llama 3.1 70B. At $n{=}13$, each remaining local family adds two replicas.
Non-Web evaluation uses 60 synthetic Python snippets across six CWE families (code vulnerability patching, CVP) and 50 synthetic syslog bundles across five incident types (semantic log auditing, SLA), half fabricated in each domain.
Under A4, \astral accepts no refuted SLA (25 items) or CVP (30 items) claim at any $n \in \{7, 13, 31, 67, 100\}$.
MV accepts none at $n{=}7$ but 88.0\% (SLA) and 46.7\% (CVP) at every larger $n$. CP-WBFT accepts 12.0--32.0\% (SLA) and 0.0--13.3\% (CVP).
Under A1--A4 at $n{=}7$, counters are filed in 56.7--59.0\% of \astral trials, a $\crqc$ resolves them in 5.0--8.9\%, and 53.8--55.7\% end \texttt{DISPUTED}.
\astral's protocol correctness excludes \texttt{DISPUTED} verdicts and is 42.4--44.8\%.

% Sources: artifacts/analysis/www_public_w1.txt section 2 (public rows),
% artifacts/analysis/selective_mv_matched_coverage.txt lines 13, 15, 19 (Synthetic row).
\begin{table}[!htbp]
\centering
\caption{Abstaining majority voting on \astral's votes (\%). Conf.\ MV abstains below a mean-confidence threshold calibrated to \astral's benign coverage, and MV$_5$ needs five agreeing votes. TAR is the acceptance rate of supported claims under A1--A4. Synthetic pools MFV, CVP, and SLA at $n{=}7$.}\label{tab:matched_cov}
\setlength{\tabcolsep}{4pt}
\renewcommand{\arraystretch}{0.88}
\resizebox{\columnwidth}{!}{%
\begin{tabular}{@{}l*{9}{r}@{}}
\toprule
 & \multicolumn{3}{c}{\textbf{\astral}} & \multicolumn{3}{c}{\textbf{Conf.\ MV}} & \multicolumn{3}{c}{\textbf{MV$_5$}} \\
\cmidrule(lr){2-4}\cmidrule(lr){5-7}\cmidrule(l){8-10}
 & Cov. & ASR$\downarrow$ & TAR$\uparrow$ & Cov. & ASR$\downarrow$ & TAR$\uparrow$ & Cov. & ASR$\downarrow$ & TAR$\uparrow$ \\
\midrule
AVeriTeC & 16.0 & \textbf{21.5} & 57.8 & 16.0 & 59.1 & \textbf{82.4} & 82.4 & 44.3 & 70.9 \\
Climate-FEVER & 37.5 & \textbf{43.0} & 65.4 & 37.1 & 71.4 & \textbf{85.7} & 73.6 & 53.5 & 73.4 \\
FEVER & 33.3 & \textbf{37.0} & 77.4 & 33.0 & 69.3 & \textbf{88.6} & 68.6 & 54.9 & 85.1 \\
PUBHEALTH & 55.9 & \textbf{36.1} & 60.6 & 55.9 & 54.7 & \textbf{80.3} & 80.7 & 43.3 & 65.8 \\
SciFact & 61.7 & \textbf{53.1} & 67.9 & 61.7 & 77.7 & \textbf{84.9} & 58.3 & 61.1 & 72.2 \\
Synthetic & 43.8 & 1.4 & 86.4 & 41.0 & \textbf{0.0} & 54.3 & 99.0 & 2.6 & \textbf{89.0} \\
\bottomrule
\end{tabular}}
\end{table}

\looseness=-1 Forged votes carry confidence 0.95, so Conf.\ MV accepts 54.7--77.7\% of refuted public-benchmark claims.
On the synthetic suite, it accepts no refuted claim and 54.3\% of supported ones, against 86.4\% for \astral.
MV that needs six or seven agreeing votes has lower ASR than MV$_5$, but its TAR stays below \astral's on every public benchmark, and two Byzantine assessors can block any of its decisions.
With refuted claims resampled, the 95\% interval of \astral's pooled-ASR gap to each bracketed baseline in \Cref{tab:astral_main} excludes zero.

\looseness=-1 Under A2, ASR on CVP and MFV is 100\% for L1-BFT, 0.0\% and 6.0\% with a $\semqc$ from $\vstruct$ and $\vinfer$, and 0.0\% and 2.0\% for full \astral.
\astral-NoDiv averages 3.3\% and 3.5\% over backends (CVP 0.0--13.3\%, MFV 2.0--4.0\%).
Simple Threshold, which follows one seeded assessor per trial, reaches 23.3\% and 40.0\%.

\begin{table}[!htbp]
\centering
\caption{Synthetic suite under A1--A4 (840 attacked, 210 benign trials per protocol and $n$). Attack Reject.\ counts refuted claims not accepted, and tokens are on code patching. L1 = L1-BFT, CP = CP-WBFT.}\label{tab:protocol-comparison}
\evaltablefont
\setlength{\tabcolsep}{7pt}
\renewcommand{\arraystretch}{0.88}
\resizebox{\columnwidth}{!}{
\begin{tabular}{@{}lcccc|cccc@{}}
\toprule
 & \multicolumn{4}{c|}{$n = 7$} & \multicolumn{4}{c}{$n = 31$} \\
\textbf{Metric (\%)} & \textbf{L1} & \textbf{MV} & \textbf{CP} & \textbf{\astral} & \textbf{L1} & \textbf{MV} & \textbf{CP} & \textbf{\astral} \\
\midrule
Attack Reject.\ $\uparrow$ & 0.0 & 96.9 & 82.1 & \astralcell{98.6} & 0.0 & 81.7 & 79.5 & \astralcell{99.0} \\
Benign SCSR $\uparrow$ & 50.0 & 93.8 & \astralsoft{\textbf{94.3}} & 43.3 & 50.0 & 93.8 & \astralsoft{\textbf{94.3}} & 32.9 \\
Counter-Card Filing & -- & -- & -- & \astralsoft{58.3} & -- & -- & -- & \astralsoft{67.0} \\
$\crqc$ Resolution $\uparrow$ & -- & -- & -- & \astralsoft{6.3} & -- & -- & -- & \astralsoft{13.5} \\
Stable Finality $\uparrow$ & -- & -- & -- & \astralsoft{43.9} & -- & -- & -- & \astralsoft{41.0} \\
Tokens/decision & -- & 8.1K & 16.4K & \astralsoft{12.0K} & -- & 36.9K & 180.0K & \astralsoft{172.7K} \\
\bottomrule
\end{tabular}
}
\end{table}

\clearpage

\end{document}